\documentclass[copyright,creativecommons]{eptcs}

\usepackage{breakurl}             
\usepackage{underscore}           

\usepackage{verbatim}
\usepackage{amsfonts}
\usepackage{amsmath}
\usepackage{amssymb}
\usepackage{amsthm}

\usepackage{xspace}
\usepackage{float}
\usepackage{tikz}
\usepackage{pgf}
\usepackage{bussproofs}
\usepackage{framed}
\usepackage{listings}
\usepackage[nameinlink]{cleveref}
\usepackage{subcaption}
\usepackage{xfrac}
\usepackage{mathtools}
\usepackage{url}
\usepackage[numbers, sort, comma, square]{natbib}
\usepackage{enumerate}
\usepackage{cases}
\usepackage{xkcdcolors} 
\usepackage{thmtools}
\usepackage{bold-extra} 
\usepackage{paralist} 
\usepackage{booktabs} 
\usepackage{multirow} 
\usepackage{eurosym} 
\usepackage{fixmath} 

\hypersetup{
    colorlinks,
    linkcolor={red!50!black},
    citecolor={blue!50!black},
    urlcolor={blue!80!black}
}

\declaretheoremstyle[ 
  spaceabove=12pt, 
  spacebelow=6pt, 
  bodyfont=\itshape
]{thmboxstyle}
\declaretheoremstyle[
  spaceabove=12pt, 
  spacebelow=6pt, 
  bodyfont=\normalfont,
  notefont=\normalfont\itshape,
  qed=$\blacksquare$,
  spaceabove=4mm, 
  spacebelow=4mm
]{definitionstyle}
\declaretheoremstyle[
  bodyfont=\itshape,
  qed=$\blacksquare$,
  spaceabove=4mm, 
  spacebelow=4mm
]{remarkstyle}
\declaretheoremstyle[]{nostyle}


\declaretheorem[name=Example,style=definitionstyle,numberwithin=section,Refname={Example,Examples}]{example}
\declaretheorem[name=Definition,style=definitionstyle,numberwithin=section,Refname={Definition,Definitions}]{definition}

\usepackage{alphalph}
\usepackage{etoolbox}

\AtBeginDocument{%
  \AtBeginEnvironment{subequations}{%
  }
}


\lstdefinelanguage{elixir}{
  morekeywords={case,catch,def,do,else,false,%
  use,alias,receive,defmacro,defp,%
  for,if,import,defmodule,defprotocol,%
  nil,defmacrop,defoverridable,defimpl,%
  super,fn,raise,true,try,end,with,unless, %
  @dual, @spec, @session, send,%
  and, when},
  otherkeywords={<-,->},
  emph={self},
  commentstyle=\itshape\color{gray},
  sensitive=true,
  morecomment=[l]{\#},
  morecomment=[n]{/*}{*/},
  morestring=[b]",
  morestring=[b]',
  morestring=[b]"""
}

\lstset{
  language=elixir,
  numbers=left,
  basicstyle=%
    \ttfamily
    \lst@ifdisplaystyle\footnotesize\fi,
  keywordstyle=\bfseries,
  xleftmargin=2em,
  frame=single,
  rulecolor=\color{white},
  framexleftmargin=0em,
  numberstyle=\footnotesize\color{gray}\sffamily,
  breaklines=true, 
  postbreak=\mbox{\textcolor{red}{$\hookrightarrow$}\space}, 
  showstringspaces=false,
  escapeinside={(*}{*)},
  basewidth=0.54em,
  tabsize=3
}

\newcommand{\lstprompt}{>>>}
\newcommand\numberwithprompt[1]{\footnotesize\ttfamily\selectfont \lstprompt}

\lstdefinelanguage{output}{
  numbers=none
}


\crefname{figure}{figure}{figures}
\crefname{section}{section}{sections}
\crefname{chapter}{chapter}{chapters}
\crefname{table}{table}{tables}
\crefname{listing}{listing}{listings}
\crefname{prp}{property}{properties}
\crefname{frm}{formula}{formulae}
\crefname{case}{case}{cases}
\crefname{item}{item}{items}
\crefname{objective}{objective}{objectives}
\crefname{expression}{expression}{expressions}
\crefname{inlineitem}{}{}

\creflabelformat{item}{#2$#1$#3}
\creflabelformat{inlineitem}{\itshape #2(#1)#3}
\creflabelformat{objective}{#2O#1#3}
\creflabelformat{expression}{#2(#1)#3}

\crefrangelabelformat{equation}%
{(#3#1#4--#5\crefstripprefix{#1}{#2}#6)}
\crefrangelabelformat{subequation}%
{(#3#1#4--#5\crefstripprefix{#1}{#2}#6)}
\crefrangelabelformat{expression}%
{(#3#1#4--#5\crefstripprefix{#1}{#2}#6)}

\setdefaultenum{\itshape (i)}{}{}{}

\usetikzlibrary {
  matrix,
  shapes,
  arrows,
  shadows,
  calc,
  chains,
  arrows.meta,
  patterns,
  fit,
  bending,
  shadows,
  shapes.geometric,
  tikzmark,
  decorations.pathmorphing,
  decorations.text,
  decorations.pathreplacing,
  decorations.markings
}

\tikzset{
  label/.style={
    font=\scriptsize\itshape,
    inner sep=0.4em
  },
  state/.style={
    circle,
    text width=0.8em,
    inner sep=0.1em,
    text depth=0.08em,
    draw,
    font=\scriptsize
  },
  participant/.style={
    draw=black,
    rounded corners,
    semithick,
    font=\sffamily\normalsize,
    text height=0.3cm,
    text centered,
    anchor=base
  },
  stage/.style={
    draw=black,
    font=\sffamily\small,
    text height=0.2cm,
    minimum height=0.55cm,
    text centered
  },
  spawn/.style={
    draw=black!50!green,
    thick,
    font=\sffamily\small,
    text height=0.2cm,
    minimum height=0.55cm,
    text centered
  },
  function/.style={
    draw=black!50!red,
    font=\sffamily\small,
    text height=0.2cm,
    minimum height=0.55cm,
    text centered
  },
  point/.style={
    minimum size=0pt, 
    inner sep=0pt
  },
  clientprocess/.style={
    draw=black,
    fill=white,
    semithick,
    minimum width=0.5cm,
    minimum height=0.5cm,
    font=\sffamily\small,
    text height=0.2cm,
    inner sep=0.4cm,
    line width=0.5mm
  },
  negated/.style={
        decoration={markings,
            mark= at position 0.5 with {
                \node[transform shape] (tempnode) {$\backslash$};
            }
        },
        postaction={decorate}
  },
  ashadow/.style={
    opacity=.3, 
    shadow xshift=1.2mm, 
    shadow yshift=-1.2mm 
  }
}


\newcommand{\elixircode}[1]{\lstinline[language=elixir, breaklines=false]$#1$}

\newcommand{\mycomment}[1]{}
\newcommand*{\ElixirST}{\textsf{ElixirST}\xspace}
\newcommand*{\Elixir}{Elixir\xspace}
\newcommand*{\Erlang}{Erlang\xspace}

\newcommand*{\pid}{\emph{pid}\xspace}
\newcommand*{\pids}{\emph{pids}\xspace}
\newcommand*{\eg}{\textit{e.g.}\xspace}

\newcommand*{\ie}{\textit{i.e.,}\xspace}

\newcommand*{\wrt}{w.r.t.\xspace}
\newcommand*{\etal}{\textit{et~al.}\xspace}
\newcommand{\defsymbol}{\ensuremath{\stackrel{\text{\tiny def}}{=}}}

\newcommand{\highlight}[2]{\colorbox{#1!17}{$\displaystyle #2$}}

\renewcommand{\highlight}[2]{\colorbox{#1!17}{#2}}

\newcommand{\blueColor}[1]{\textcolor{blue}{#1}}
\newcommand{\greenColor}[1]{\textcolor{black!50!green}{#1}}
\definecolor{darkgray}{rgb}{0.2, 0.2, 0.2}

\colorlet{RED}{red}

\newcommand{\fullref}[1]  {\nameref{#1} \nameCref{#1}\xspace}

\newcommand{\stsessionbranch}[2]  {\ensuremath{\& \big\{ #1 \big\}_{#2}}}
\newcommand{\stsessionchoice}[2]  {\ensuremath{\oplus \big\{ #1 \big\}_{#2}}}
\newcommand{\stsessionrec}[2]     {\ensuremath{\textsf{rec} \; \stsessionrecvar{#1} \, . \, #2}}
\newcommand{\stsessionrecvar}[1]	{\ensuremath{\textsf{#1}}}
\newcommand{\stend}               {\ensuremath{\textsf{end}}}

\newcommand{\texttype}[1]	        {\ensuremath{\textsf{#1}}}
\newcommand{\textlabel}[1]	      {\ensuremath{\texttt{#1}}}
\newcommand{\basicvalues}[1]	      {\ensuremath{\textit{#1}}}

\newcommand{\dualpid}[0]		    {\ensuremath{y}}
\newcommand{\identifier}[0]		    {\ensuremath{w}}
\newcommand{\pidvalue}[0]           {\ensuremath{\iota}}

\newcommand{\prooflabel}[1]			{\ensuremath{\left[\textsc{#1}\right]}}
\newcommand{\prooflabell}[2]		{\ensuremath{\left[\textsc{#1}_\textsc{#2}\right]}}

\newcommand{\stmodule}[3]			{\ensuremath{\texttt{defmodule} \; #1 \; \texttt{do} \; #2 \; #3 \; \texttt{end}}}
\newcommand{\stdef}[3]				{\ensuremath{\texttt{def} \; #1 \! \left( #2 \right) \texttt{do} \; #3 \; \texttt{end}}}

\newcommand{\stdefp}[3]				{\ensuremath{\texttt{defp} \; #1 \! \left( #2 \right) \texttt{do} \; #3 \; \texttt{end}}}

\newcommand{\stsessionannotation}[1]{\ensuremath{\texttt{@session} \; ``#1"}}

\newcommand{\stdualannotation}[1]	{\ensuremath{\texttt{@dual} \; ``#1"}}

\newcommand{\stspec}[3]				{\ensuremath{\texttt{@spec} \; #1 \left( #2 \right) \; :: \; #3}}

\newcommand{\stspecbig}[3]				{\ensuremath{\texttt{@spec} \; #1 \big( #2 \big) \; :: \; #3}}

\newcommand{\stexpsequence}[3]		{\ensuremath{ #1 = #2 ; \; #3}}
\newcommand{\stexpsend}[2]			{\ensuremath{ \texttt{send} \left( #1, #2 \right) }}
\newcommand{\stexpreceive}[3]		{\ensuremath{ \texttt{receive do} \; { \left( #1 \rightarrow #2 \right)}_{#3} \texttt{end} }}
\newcommand{\stexpreceiveleft}		{\ensuremath{ \texttt{receive do} \; }}
\newcommand{\stexpreceiveright}[3]		{\ensuremath{ { \left( #1 \rightarrow #2 \right)}_{#3} \texttt{end} }}
\newcommand{\stexpcase}[4]			{\ensuremath{ \texttt{case} \; #1 \; \texttt{do} \; { \left( #2 \rightarrow #3 \right)}_{#4} \texttt{end} }}
\newcommand{\stexpfunction}[2]		{\ensuremath{ #1 \left( #2 \right) }}
\newcommand{\stexpvalue}[1]			{\ensuremath{ #1 }}
\newcommand{\stexpvariable}[1]		{\ensuremath{ #1 }}
\newcommand{\stexpbinaryop}[3]		{\ensuremath{ #1 \; #2 \; #3}}
\newcommand{\stexpnot}[1]			{\ensuremath{ \texttt{not} \; #1}}
\newcommand{\stexpandonly}			{\ensuremath{\texttt{and}}}


\newcommand{\stexporonly}			{\ensuremath{\texttt{or}}}
\newcommand{\tupletype}[1]			{\ensuremath{ \left\{ #1 \right\} }}
\newcommand{\listtype}[1]			{\ensuremath{ \left[ \, #1 \right] \, }}
\newcommand{\listtypee}[2]			{\ensuremath{ \left[ \, #1 \; | \; #2 \, \right] }}
\newcommand{\dotsspace}				{\ensuremath{ , \; \dots , \; }}

\newcommand{\atom}[1]			    {\texttt{:}\hspace*{-1.5pt}\texttt{#1}}
\newcommand{\annotation}[1]			{\elixircode{@}\hspace*{-1.2pt}\elixircode{#1}}


\newcommand{\typing}[7]				{\ensuremath{#1 \cdot #2 \vdash^{#3} #4 \rhd #5 : #6 \lhd #7}}

\newcommand{\typingsigma}[8]		{\ensuremath{#1 \cdot #2 \vdash_{#3}^{#4} #5 \rhd #6 : #7 \lhd #8}}
\newcommand{\typingmodule}[1]		{\ensuremath{\vdash #1}}
\newcommand{\typingexpression}[3]	{\ensuremath{#1 \vdash_\text{exp} #2 : #3}}

\newcommand{\typingpattern}[4]		{\ensuremath{\vdash_\text{pat}^{#1} #2 : #3 \; \rhd \; #4 }}

\newcommand{\fun}[2]                {\ensuremath{#1\hspace{-0.5mm}/\hspace{-0.5mm}#2}}

\newcommand{\expressionsem}[2]	    {\ensuremath{#1 \rightarrow #2}}
\newcommand{\termsem}[3]	        {\ensuremath{#2 \xrightarrow{#1} #3}}
\newcommand{\termsemm}[4]	        {\ensuremath{#3 \xrightarrow[#2]{#1} #4}}

\newcommand{\substitution}[2]	    {\ensuremath{ \left[\sfrac{#1}{#2}\right]}}
\newcommand{\substitutionsingle}[1]	{\ensuremath{ \left[#1\right]}}
\newcommand{\substitutionvx}	    {\ensuremath{ \substitution{v}{x}}}

\newcommand{\actionsend}[2]	        {\ensuremath{#1 ! #2}}
\newcommand{\actionreceive}[1]	    {\ensuremath{? #1}}



\newcommand{\function}[2]	                {\functionname{#1}(#2)}
\newcommand{\functionname}[1]	            {\textbf{#1}\xspace}


\newcommand{\fv}[1]	                {\function{fv}{#1}}

\newcommand{\bv}[1]	                {\function{bv}{#1}}

\newcommand{\vars}[1]	                {\function{vars}{#1}}

\newcommand{\after}[1]	                {\function{after}{#1}}
\newcommand{\aftername}	            {\functionname{after}}

\newcommand{\match}[1]	                {\function{match}{#1}}
\newcommand{\matchname}	            {\functionname{match}}

\newcommand{\sessions}[1]	                {\function{sessions}{#1}}

\newcommand{\details}[1]	                {\function{details}{#1}}
\newcommand{\detailsname}	            {\functionname{details}}

\newcommand{\functions}[1]	                {\function{functions}{#1}}

\newcommand{\typeof}[1]	                {\function{type}{#1}}

\newcommand{\dom}[1]	                {\function{dom}{#1}}





\newtoggle{shortversion}
\toggletrue{shortversion}




\title{
  Session Fidelity for \ElixirST:\\ 
  A Session-Based Type System for \Elixir Modules
  }
\author{
  \hspace{4ex}Gerard Tabone \hspace{23mm} Adrian Francalanza
\institute{
  Computer Science Department\\
  University of Malta\\
  Msida, Malta
  }
\email{
  \hspace{1.7ex}
  gerard.tabone.17@um.edu.mt 
  \hspace{7mm}
  adrian.francalanza@um.edu.mt
  }
}

\begin{document}
\maketitle

\begin{abstract}
This paper builds on prior work investigating the adaptation of session types to provide behavioural information about \Elixir modules. 
A type system called \ElixirST has been constructed to statically determine whether functions in an \Elixir module observe their endpoint specifications, expressed as session types;
a corresponding tool automating this typechecking has also been constructed.
In this paper we formally validate this type system.
An LTS-based operational semantics for the language fragment supported by the type system is developed, modelling its runtime behaviour when invoked by the module client.
This operational semantics is then used to prove session fidelity for \ElixirST.
\end{abstract}

\section{Introduction}
\label{sec:intro}
In order to better utilise recent advances in microprocessor design and architecture distribution, modern programming languages offer a variety of abstractions for the construction of concurrent programs.
In the case of message-passing programs, concurrency manifests itself as spawned computation that exhibits \emph{communication as a side-effect}, potentially influencing the execution of other (concurrent) computation. 
Such side-effects inevitably increase the complexity of the programs produced and lead to new sources of errors.
As a consequence, program correctness becomes harder to verify and language support for detecting errors at the development stage can substantially decrease the number of concurrency errors.

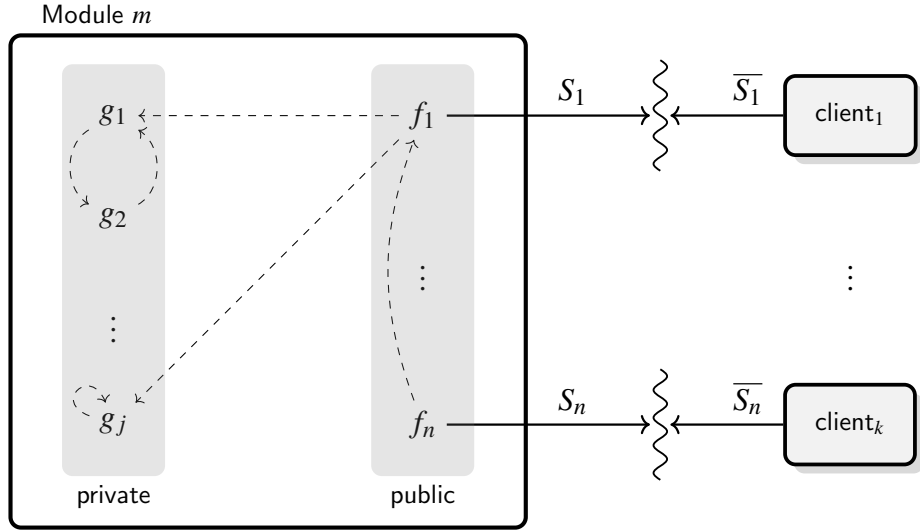
\begin{figure}[t]
	\newif\ifstartcompletesineup
	\newif\ifendcompletesineup
	\pgfkeys{
		/pgf/decoration/.cd,
		start up/.is if=startcompletesineup,
		start up=true,
		start up/.default=true,
		start down/.style={/pgf/decoration/start up=false},
		end up/.is if=endcompletesineup,
		end up=true,
		end up/.default=true,
		end down/.style={/pgf/decoration/end up=false}
	}
	\pgfdeclaredecoration{complete sines}{initial}
	{
		\state{initial}[
			width=+0pt,
			next state=upsine,
			persistent precomputation={
				\ifstartcompletesineup
					\pgfkeys{/pgf/decoration automaton/next state=upsine}
					\ifendcompletesineup
						\pgfmathsetmacro\matchinglength{
							0.5*\pgfdecoratedinputsegmentlength / (ceil(0.5* \pgfdecoratedinputsegmentlength / \pgfdecorationsegmentlength) )
						}
					\else
						\pgfmathsetmacro\matchinglength{
							0.5 * \pgfdecoratedinputsegmentlength / (ceil(0.5 * \pgfdecoratedinputsegmentlength / \pgfdecorationsegmentlength ) - 0.499)
						}
					\fi
				\else
					\pgfkeys{/pgf/decoration automaton/next state=downsine}
					\ifendcompletesineup
						\pgfmathsetmacro\matchinglength{
							0.5* \pgfdecoratedinputsegmentlength / (ceil(0.5 * \pgfdecoratedinputsegmentlength / \pgfdecorationsegmentlength ) - 0.4999)
						}
					\else
						\pgfmathsetmacro\matchinglength{
							0.5 * \pgfdecoratedinputsegmentlength / (ceil(0.5 * \pgfdecoratedinputsegmentlength / \pgfdecorationsegmentlength ) )
						}
					\fi
				\fi
				\setlength{\pgfdecorationsegmentlength}{\matchinglength pt}
			}] {}
		\state{downsine}[width=\pgfdecorationsegmentlength,next state=upsine]{
			\pgfpathsine{\pgfpoint{0.5\pgfdecorationsegmentlength}{0.5\pgfdecorationsegmentamplitude}}
			\pgfpathcosine{\pgfpoint{0.5\pgfdecorationsegmentlength}{-0.5\pgfdecorationsegmentamplitude}}
		}
		\state{upsine}[width=\pgfdecorationsegmentlength,next state=downsine]{
			\pgfpathsine{\pgfpoint{0.5\pgfdecorationsegmentlength}{-0.5\pgfdecorationsegmentamplitude}}
			\pgfpathcosine{\pgfpoint{0.5\pgfdecorationsegmentlength}{0.5\pgfdecorationsegmentamplitude}}
	}
		\state{final}{}
	}
	\centering
	\begin{tikzpicture}[scale=1.37]

		\node (private-f-1) at (1,4) {\large $g_1$};
		\node (private-f-2) at (1,3) {\large $g_2$};
		\node (private-f-m-1) at (1,2) {\large $\vdots$};
		\node (private-f-m) at (1,1) {\large $g_j$};
		%
		
		\node (public-f-1) at (4,4) {\large $f_1$};
		\node (public-f-n-1) at (4,2.5) {\large $\vdots$};
		\node (public-f-n) at (4,1) {\large $f_n$};

		\node[font=\sffamily\small,anchor=south west] (module) at (0.2,4.8cm) {Module $m$};

		\node (f-1-client) [right=2.7 of public-f-1]{};
		\node (f-n-client) [right=2.7 of public-f-n]{};	
		\node[clientprocess, rounded corners, fill=gray!10, drop shadow=ashadow] (client-1) [right=1.5cm of f-1-client] {$\textsf{client}_1$};
		\node[clientprocess, rounded corners, fill=gray!10, drop shadow=ashadow] (client-n) [right=1.5cm of f-n-client] {$\textsf{client}_k$};
		\node (client-dots) at ($(client-1)!0.5!(client-n)$) {\large $\vdots$};	
		
		\node (S-1-temp) [above=0.08cm of f-1-client] {};
		\node (S-n-temp) [above=0.08cm of f-n-client] {};
		\node (S-1) [left=0.7cm of S-1-temp] {\large $S_1\vphantom{\overline{S_1}}$};
		\node (S-1-dual) [right=0.7cm of S-1-temp] {\large $\overline{S_1}$};
		\node (S-n) [left=0.7cm of S-n-temp] {\large $S_n\vphantom{\overline{S_1}}$};
		\node (S-n-dual) [right=0.7cm of S-n-temp] {\large $\overline{S_n}$};

		\node (f-1-mid-above) [above=6mm of f-1-client]{};
		\node (f-1-mid-below) [below=6mm of f-1-client]{};
		\node (f-n-mid-above) [above=6mm of f-n-client]{};
		\node (f-n-mid-below) [below=6mm of f-n-client]{};
		\begin{scope}[every path/.style={
				decoration={
					complete sines,
					segment length=2.2mm,
					amplitude=1.8mm,
					mirror,
					start up,
					end up
				},
				decorate,
				line width=0.25mm
			}]
			\draw (f-1-mid-above) node [anchor=east] {} -- (f-1-mid-below);
			\draw (f-n-mid-above) node [anchor=east] {} -- (f-n-mid-below);
		\end{scope}

		\draw [line width=0.3mm, ->] (public-f-1) to (f-1-client);
		\draw [line width=0.3mm, ->] (client-1) to (f-1-client);
		\draw [line width=0.3mm, ->] (public-f-n) to (f-n-client);
		\draw [line width=0.3mm, ->] (client-n) to (f-n-client);

		\draw [dashed, ->](private-f-1) to[bend right=60] (private-f-2);
		\draw [dashed, ->] (private-f-2) to[bend right=60] (private-f-1);
		\draw [dashed, ->] (public-f-1) to (private-f-1);
		\draw [dashed, ->] (public-f-1) to (private-f-m);
    \draw [dashed, ->] (public-f-n) to[bend left=20] (public-f-1);
		\draw [dashed, ->] (private-f-m) to [looseness=5,in=110, out=160] (private-f-m);

		
		\draw[line width=0.55mm, rounded corners] (0,0) rectangle (5cm,4.78cm);

		\fill[opacity=.2, gray, rounded corners, thin] (0.5cm,0.5cm) rectangle (1.5cm,4.5cm);
		\fill[opacity=.2, gray, rounded corners, thin] (3.5cm,0.5cm) rectangle (4.5cm,4.5cm);
		
		\node[font=\sffamily\small] at (1cm,0.3cm) {private};
		\node[font=\sffamily\small] at (4cm,0.3cm) {public};

	\end{tikzpicture}

	\caption{An \Elixir module consisting of public and private functions, interacting with client processes}
	\label{fig:roadmap}
\end{figure}

\Elixir~\cite{thomas2018programming}, based on the actor model~\cite{DBLP:conf/ijcai/HewittBS73,DBLP:books/daglib/Agha90}, is one such example of a modern programming language for concurrency. 
As depicted in  \Cref{fig:roadmap}, \Elixir programs are structured as a collection of \emph{modules} that contain \emph{functions}, the basic unit of code decomposition in the language.
A module only exposes a subset of these functions to external invocations by defining them as \emph{public}; these functions act as the only entry points to the functionality encapsulated by a module.
Internally, the bodies of these public functions may then invoke other functions, which can either be the \emph{public} ones already exposed or the \emph{private} functions that can only be invoked from within the same module.
For instance, \Cref{fig:roadmap} depicts a module $m$ which contains several public functions (\ie $f_1, \ldots, f_n$) and private functions (\ie $g_1,\ldots,g_j$).
For example, the public function $f_1$ delegates part of its computation by calling the private functions $g_1$ and $g_j$, whereas the body of the public function $f_n$ invokes the other public function $f_1$ when executed.
Internally, the body of the private function $g_1$ calls the other private function $g_2$ whereas the private function $g_j$ can recursively call itself. 


A prevalent \Elixir design pattern is that of a server listening for client requests. 
For each request, the server spawns a (public) function to execute independently and act as a dedicated client handler: after the respective process IDs of the client and the spawned handler are made known to each other, a session of interaction commences between the two concurrent entities (via message-passing).  
For instance, in \Cref{fig:roadmap}, a handler process running public function $f_1$ is assigned to the session with client $\textsf{client}_1$ whereas the request from $\textsf{client}_k$ is assigned a dedicated handler running function $f_n$.
Although traditional interface elements such as function parameters (used to instantiate the executing function body with values such as the client process ID) and the function return value (reporting the eventual outcome of handled request) are important, the messages exchanged between the two concurrent parties within a session are equally important for software correctness.
More specifically, communication incompatibilities between the interacting parties could lead to various runtime errors.
For example, if in a session a message is sent with an unexpected payload, it could cause the receiver's subsequent computation depending on it to crash (\eg multiplying by a string when a number should have been received instead). 
Also, if messages are exchanged in an incorrect order, they may cause \emph{deadlocks} (\eg two processes waiting forever for one another to send messages of a particular kind when a message of a different kind has been sent instead).
%

In many cases, the expected protocol of interactions within a session can be statically determined from the respective endpoint implementations, namely the function bodies;
for simplicity, our discussion assumes that endpoint interaction protocols are dual, \eg $S_1$ and $\overline{S_1}$ in \Cref{fig:roadmap}.
Although \Elixir provides mechanisms for specifying (and checking) the parameters and return values of a function within a module, it does \emph{not} provide any support for describing (and verifying) the interaction protocol of a function in terms of its communication side-effects.
To this end, in earlier work~\cite{DBLP:conf/agere/TaboneF21} we devised the tool\footnote{\ElixirST is available on GitHub: \url{https://github.com/gertab/ElixirST}} \ElixirST, assisting module construction in two ways:  
\begin{inparaenum}[(a)]
  \item it allows module designers to formalise the session endpoint protocol as a session type, and ascribe it to a public function;
  \item it implements a type-checker that verifies whether the body of a function respects the ascribed session type protocol specification. 
\end{inparaenum}

\vspace{-3mm}

\paragraph*{Contribution.} This paper 
validates the underlying type system on which the \ElixirST type-checker is built.
More concretely, in \Cref{sec:semantics} we formalise the runtime semantics of the \Elixir language fragment supported by \ElixirST as a labelled transition system (LTS), modelling the execution of a spawned handler interacting with a client within a session.
This operational semantics then allows us to prove \emph{session fidelity} for the \ElixirST type system in \Cref{sec:session-fidelity}.
In \Cref{sec:preliminaries} we provide the necessary background on the existing session type system from~\cite{DBLP:conf/agere/TaboneF21} to make the paper self-contained.

\section{Preliminaries}
\label{sec:preliminaries}
We introduce a core \Elixir subset and review the main typing rules for the \ElixirST type system~\cite{DBLP:conf/agere/TaboneF21}.

\subsection{The Actor Model}
\label{sec:actors}

\Elixir uses the actor concurrency model~\cite{DBLP:conf/ijcai/HewittBS73,DBLP:books/daglib/Agha90}.
It describes computation as a group of concurrent processes, called \emph{actors}, which do \emph{not} share any memory and interact exclusively via asynchronous messages.
Every actor is identified via a unique process identifier (\pid) which is used as the address when sending messages to a specific actor.
Messages are communicated asynchronously, and stored in the mailbox of the addressee actor. 
An actor is the only entity that can fetch messages from its mailbox, using mechanisms such as pattern matching.
Apart from sending and reading messages, an actor can also spawn other actors and obtain their fresh \pid as a result; this \pid can be communicated as a value to other actors via messaging.  



\subsection{Session Types}
\label{sec:formal-session-types}

The \ElixirST type system~\cite{DBLP:conf/agere/TaboneF21} assumes the standard expression types, including basic types, such as \texttype{boolean}, \texttype{number}, \texttype{atom} and \texttype{pid}, and inductively defined types, such as tuples ($\tupletype{T_1 \dotsspace T_n}$) and lists ($\listtype{T}$);
these already exist in the \Elixir language and they are dynamically checked.
It extends these with (binary) session types, which are used to statically check the message-passing interactions. 
%
\begin{equation*}
	\textrm{Expression types} \qquad T \Coloneqq \;\,
		\texttype{boolean} \; | \; \texttype{number} \; | \; \texttype{atom} \; | \; \texttype{pid} \; | \; \tupletype{T_1 \dotsspace T_n} \; | \; \listtype{T}
    \qquad\qquad\qquad\quad
\end{equation*}
\begin{align*}
	\textrm{Session types} && S \Coloneqq & \;\,
		 \stsessionbranch{?\texttt{l}_i \big(\widetilde{T_i} \big).S_i}{i \in I} && \textrm{Branch}
     & | & \;\, \stsessionrec{X}{S} && \textrm{Recursion} 
     \\
		  && | & \;\, \stsessionchoice{!\texttt{l}_i \big(\widetilde{T_i} \big).S_i}{i \in I} \; && \textrm{Choice} 
      & | & \;\, \stsessionrecvar{X} && \textrm{Variable}
      \\
		 && | & \;\, \stend && \textrm{Termination}
\end{align*}
The \emph{branching} construct, \stsessionbranch{?\texttt{l}_i \big(\widetilde{T_i} \big).S_i}{i \in I}, requires the code to be able to receive a message that is labelled by any one of the labels $\texttt{l}_i$, with the respective list of values of type $\widetilde{T_i}$ (where $\widetilde{T}$ stands for $T^1,\ldots,T^k$ for some $k\geq 0$), and then adhere to the continuation session type $S_i$. 
The \emph{choice} construct is its dual and describes the range and format of outputs the code is allowed to perform. 
In both cases, the labels need to be pairwise distinct.
Recursive types are treated {equi-recursively}~\cite{DBLP:books/daglib/0005958}, and used interchangeably with their unfolded counterparts.
For brevity, the symbols $\&$ and $\oplus$ are occasionally omitted for singleton options, \eg, $\stsessionchoice{!\textlabel{l}(\texttype{number}).S_1}{}$ is written as $!\textlabel{l}(\texttype{number}).S_1$; similarly $\stend$ may be omitted as well, \eg, $?\textlabel{l}()$ stands for $?\textlabel{l}().\stend$. 
The \emph{dual} of a session type $S$ is denoted as $\overline{S}$.

\subsection{Elixir Syntax}
\label{sec:syntax}

\begin{figure}[t]
	\hspace{-3mm}
	\begin{tabular}{l|l}
	\begin{minipage}{.5\linewidth} 
		$\begin{aligned}
		\textrm{Module} && M \Coloneqq & \;\, \stmodule{m}{\widetilde{P}}{\widetilde{D}} \\
		\textrm{Public fun.} && D \Coloneqq & \;\, K \quad B \quad \stdef{f}{\dualpid, \; \widetilde{x}}{t} \\
		\textrm{Private fun.} && P \Coloneqq & \;\, B \quad \stdefp{f}{\dualpid, \; \widetilde{x}}{t} \\
		\textrm{Type ann.} && B \Coloneqq & \;\, \stspecbig{f}{\widetilde{T}}{T} \\
		\textrm{Session ann.} && K \Coloneqq & \;\, \stsessionannotation{\stsessionrecvar{X} = S} \\ 
		&& | & \;\, \stdualannotation{\textsf{X}} 
		\\[7mm]
		\textrm{Expressions} && e \Coloneqq & \;\, \stexpvalue{w} \\ 
			&& | & \;\, \stexpnot{e} \; | \; \stexpbinaryop{e_1}{\diamond}{e_2} \\ 
			&& | & \;\, \listtypee{e_1}{e_2} \; | \; \tupletype{e_1 \dotsspace e_n} \\
		\textrm{Operators} && \diamond \Coloneqq & \;\, 
			< \; | \; > \; | \; <= \; | \; >= \; | \; == \\
			&& | & \;\, ! \! = \; | \; + \; | \; - \; | \; * \; | \; / \; | \; \stexpandonly  \; | \; \stexporonly
		\end{aligned}$
	\end{minipage}
		&
	\begin{minipage}{.5\linewidth} 
		$\begin{aligned}
			\textrm{Basic val.} && b \Coloneqq & \;\, \basicvalues{boolean} \; | \; \basicvalues{number} \; | \; \basicvalues{atom} \; | \; \basicvalues{pid} \\
			&& | & \;\, \listtype{} \\
			\textrm{Values} && v \Coloneqq & \;\, b \; | \; \listtypee{v_1}{v_2} \; | \; \tupletype{v_1 \dotsspace v_n} \\
			\textrm{Identifiers} && w \Coloneqq & \;\, b \; | \; x \\
			\textrm{Patterns} && p \Coloneqq & \;\, w \; | \; \listtypee{w_1}{w_2} \; | \; \tupletype{w_1 \dotsspace w_n} \\	
			\textrm{Terms} && t \Coloneqq & \;\, 
				e \\
				&& | & \;\, \stexpsequence{x}{t_1}{t_2} \\
				&& | & \;\, \stexpsend{\identifier}{\left\{ \atom{l}, e_1 \dotsspace e_n\right\}} \\
				&& | & \;\, 
					\begin{aligned}[t]
						&\stexpreceiveleft \\
						& \; \, \stexpreceiveright{\left\{ \atom{l}_i, p_i^1 \dotsspace p_i^n \right\}}{t_i}{i \in I}
					\end{aligned} \\
				&& | & \;\, \stexpfunction{f}{\identifier, \; e_1 \dotsspace e_n} \\ 
				&& | & \;\, \stexpcase{e}{p_i}{t_i}{i \in I} 
		\end{aligned}$
	\end{minipage}
	\end{tabular}
	\caption{Elixir syntax}
	\label{fig:elixir-syntax}
\end{figure}

\Elixir programs are organised as modules, \ie $ \stmodule{m}{\widetilde{P}}{\widetilde{D}}$.
Modules are defined by their name, $m$, and contain two sets of public $\widetilde{D}$ and private $\widetilde{P}$ functions, declared as sequences.
Public functions, $ \stdef{f}{\dualpid, \widetilde{x}}{t} $, are defined by the \elixircode{def} keyword, and can be called from any module.
In contrast, private functions, $ \stdefp{f}{\dualpid, \widetilde{x}}{t} $, can only be called from within the defining module.
Functions are defined by their name, $f$, and their body, $t$,  and parametrised by a sequence of \emph{distinct} variables, $\dualpid, \widetilde{x}$, the length of which, $|\dualpid, \widetilde{x}|$, is called the \emph{arity}.
As an extension to~\cite{DBLP:conf/agere/TaboneF21}, the first parameter ($y$), is reserved for the \pid of the dual process.  
Although a module may contain functions with the same name, their arity must be different.
%

In our formalisation, \Elixir function parameters and return values are assigned a type using the \annotation{spec} annotation, $f(\widetilde{T}) :: T$, describing the parameter types, $\widetilde{T}$, and the return type, $T$.
%
This annotation is already used by Dialyzer for success typing~\cite{DBLP:conf/ppdp/LindahlS06}.
In addition to this, we decorate public functions with session types, defined in \Cref{sec:formal-session-types}, to describe their side-effect protocol.
Public functions can be annotated directly using $\stsessionannotation {\stsessionrecvar{X} = S}$, or indirectly using the dual session type, $\stdualannotation{\stsessionrecvar{X}}$, where $\stsessionrecvar{X} = S$ is shorthand for $\stsessionrec{X}{S}$.
%

The body of a function consists of a term, $t$, which can take the form of an expression, a \texttt{let} statement, a send or receive construct, a case statement or a function call; see \Cref{fig:elixir-syntax}.  
In the case of the \texttt{let} construct, \stexpsequence{x}{t_1}{t_2}, the variable $x$ is a \emph{binder} for the variables in $t_2$ acting as a placeholder for the value that the subterm $t_1$ evaluates to.
We write $t_1; t_2$, as \emph{syntactic sugar} for \stexpsequence{x}{t_1}{t_2} whenever $x$ is not used in $t_2$.  
The \emph{send} statement, \stexpsend{\stexpvariable{x}}{\left\{ \atom{l}, e_1 \dotsspace e_n\right\}}, allows a process to send a message to the \pid stored in the variable $x$, containing a message $\left\{ \atom{l}, e_1 \dotsspace e_n\right\}$, where \atom{l} is the label.
The \emph{receive} construct, \stexpreceive{\left\{ \atom{l}_i, p_i^1 \dotsspace p_i^n \right\}}{t_i}{i \in I}, allows a process to receive a message tagged with a label that matches one of the labels $\atom{l}_i$ and a list of payloads that match the patterns $p_i^1 \dotsspace p_i^n$, branching to continue executing as $t_i$. 
Patterns, $p$, defined in \Cref{fig:elixir-syntax}, can take the form of a variable, a basic value, a tuple or a list (\eg \listtypee{x}{y}, where $x$ is the head and $y$ is the tail of the list).
The remaining constructs are fairly standard.
Variables in patterns $p_i^1 \dotsspace p_i^n$ employed by the \texttt{receive} and \texttt{case} statements are binders for the respective continuation branches $t_i$.
We assume standard notions of open (\ie $\fv{t} \neq \emptyset$) and closed  (\ie $\fv{t} = \emptyset$) terms and work up to alpha-conversion of bound variables.

\subsection{Type System}
\label{sec:behavioural-typing}

The session type system from~\cite{DBLP:conf/agere/TaboneF21} statically verifies that public functions within a module observe the communication protocols ascribed to them.
It uses three environments:
\begin{align*}
\text{Variable binding env.} & & 	\Gamma &\Coloneqq \emptyset \; \vert \; \Gamma, \; x : T \\
\text{Session typing env.} & & \Delta &\Coloneqq \emptyset \; \vert \; \Delta, \; \fun{f}{n} : S%
\\
\text{Function inf. env.} & & \Sigma &\Coloneqq 
\emptyset \; \vert \; \Sigma, \; \fun{f}{n} : \left\{ 
	\begin{aligned}
	&\texttt{params} = \widetilde{x}, \; \texttt{param\_types} = \widetilde{T}, \\ 
	&\texttt{body} = t, \; \texttt{return\_type} = T, \; \texttt{dual} = \dualpid
	\end{aligned}
	\right\}
\end{align*}

The \emph{variable binding} environment, $\Gamma$, maps (data) variables to basic types (\mbox{$x:T$}).
We write $\Gamma, x:T$ to extend $\Gamma$ with the new mapping $x:T$, where $x \notin \dom{\Gamma}$.
%
%
The \emph{session typing} environment, $\Delta$, maps function names and arity pairs to their session type ($\fun{f}{n} : S$).
%
%
If a function $\fun{f}{n}$ has a \emph{known} session type, then it can be found in $\Delta$, \ie $\Delta(\fun{f}{n}) = S$.
Each module has a static \emph{function information} environment, $\Sigma$, that holds information related to the function definitions.
%
For a function $f$, with arity $n$, $\Sigma(\fun{f}{n})$ returns the tail list of parameters ($\texttt{params}$) and their types ($\texttt{param\_types}$), the function's body ($\texttt{body}$), and its return type ($\texttt{return\_type}$).
In contrast to the original type system from~\cite{DBLP:conf/agere/TaboneF21}, $\Sigma(\fun{f}{n})$ also returns the variable name that represents the interacting process' \pid, \ie the option $\texttt{dual}$.
%
We assume that \emph{function information} environments, $\Sigma$, are \emph{well-formed}, meaning that all functions mapped  ($\fun{f}{n} \in \dom{\Sigma}$) observe the following condition requiring that the body of function $\fun{f}{n}$ is \emph{closed}:
	\begin{equation}
		\textbf{fv}\bigl(\Sigma(\fun{f}{n}).\texttt{body}\bigr) \setminus \bigl(\Sigma(\fun{f}{n}).\texttt{params}  \cup \Sigma(\fun{f}{n}).\texttt{dual}\bigr) = \emptyset
		\notag
	\end{equation}
  
Session typechecking is initiated by analysing an \Elixir module, rule \prooflabel{tModule}. 
%
A module is typechecked by inspecting each of its public functions, ascertaining that they correspond and fully consume the session types ascribed to them.
The rule uses three helper functions. 
%
The function $\functions{\widetilde{D}}$ returns a list of all function names (and arity) of the public functions ($\widetilde{D}$) to be checked individually.
The function $\sessions{\widetilde{D}}$ obtains a mapping of all the public functions to their expected session types stored in $\Delta$.
%
This ensures that when a function $f$ with arity $n$ executes, it adheres to the session type associated with it using either the \annotation{session} or \annotation{dual} annotations.
The helper function \detailsname populates the \emph{function information} environment ($\Sigma$) with details about all the \emph{public} ($\widetilde{D}$) and \emph{private} functions ($\widetilde{P}$) within the module.

%
\begin{figure}[H] 
	\centering


	\begin{center}
		\hspace*{-5mm}
		\alwaysNoLine
		\AxiomC{$ \Delta = \sessions{\widetilde{D}} \qquad \Sigma = \details{\widetilde{P}\,\widetilde{D}} \qquad  $}
		\UnaryInfC{$ \forall \fun{f}{n} \in \functions{\widetilde{D}} \cdot
		\begin{cases}
			\Delta(\fun{f}{n}) = S \qquad \Sigma\left(\fun{f}{n}\right) = \Omega\\
			\Omega \texttt{.params} = \widetilde{x} \qquad \Omega \texttt{.param\_types} = \widetilde{T} \\
			\Omega.\texttt{body} = t \qquad \Omega.\texttt{return\_type} = T \qquad \Omega.\texttt{dual} = \dualpid \\
			\tikzmarknode{tikz-typing-left}{}\typingsigma{\Delta}{\big(\dualpid : \texttype{pid}, \widetilde{x} : \widetilde{T}\big)}{\Sigma}{\dualpid}{S}{t}{T}{\texttt{end}}\tikzmarknode{tikz-typing-right}{}
		\end{cases}  $}
		\alwaysSingleLine
		\LeftLabel{\prooflabel{tModule}}
		\UnaryInfC{$ \typingmodule{\stmodule{m}{\widetilde{P}}{\widetilde{D}}} $}
		\DisplayProof
	\end{center}
	
	\begin{tikzpicture}[overlay,remember picture,>=stealth,nodes={align=left,inner ysep=1pt}]
		\draw[fill=xkcdLightOrange,opacity=0.30,blend mode=darken,draw=none] ([xshift=0mm,yshift=4.5mm]tikz-typing-left) rectangle ([xshift=1mm,yshift=-2.3mm]tikz-typing-right);
	\end{tikzpicture}


\end{figure}

For every public function \fun{f}{n} in $\functions{\widetilde{D}}$, \prooflabel{tModule} checks that its body adheres to it session type using  the \highlight{xkcdLightOrange}{highlighted} \emph{term typing} judgement detailed below:
%

\begin{minipage}{0.98\columnwidth}
	\label{page:typing-session}
\vspace{12mm}
{\large
\begin{displaymath}
\hspace*{2.5cm}
\typingsigma{\tikzmarknode{delta}{\highlight{xkcdPurple}{$\Delta$}}}{\tikzmarknode{gamma}{\highlight{xkcdBlue}{$\Gamma$}}}{\Sigma}{\tikzmarknode{identifier}{\highlight{xkcdTeal}{$\identifier$}}}{\tikzmarknode{s}{\highlight{xkcdCoral}{$S$}}}{\tikzmarknode{t}{\highlight{xkcdDarkMagenta}{$t$}}}{\tikzmarknode{T}{\highlight{xkcdAzure}{$T$}}}{\tikzmarknode{sprime}{\highlight{xkcdLeafGreen}{$S'$}}} 
\end{displaymath}
}
\begin{tikzpicture}[overlay,remember picture,>=stealth,nodes={align=left,inner ysep=1pt}]
    \path (gamma.south) ++ (0,-2.3em) node[anchor=north east] (variableenvone){ environments};
	\path (variableenvone.west) ++ (0.5em, -0.1ex) node[anchor=east,color=xkcdBlue] (variableenvtwo) {variable binding};
	\path (variableenvtwo.west) ++ (0.5em, 0.1ex) node[anchor=east] (variableenvthree) {\&};
	\path (variableenvthree.west) ++ (0.5em, -0.13ex) node[anchor=east,color=xkcdPurple] (variableenvfour) {session typing};
    \draw [<-,color=xkcdBlue,xshift=0.7ex](gamma.south) |- ([xshift=0.7ex,color=blue]variableenvtwo.south west);
    \draw [<-,color=xkcdPurple](delta.south) |- ([color=blue,xshift=0.6ex,yshift=0.3ex]variableenvfour.north west);

	\path (s.north) ++ (12.6em,2.1em) node[anchor=south east] (sessiontypeone){session types};
	\path (sessiontypeone.west) ++ (0.5em, 0.4mm) node[anchor=east,color=xkcdLeafGreen] (sessiontypetwo) {\emph{residual}};
	\path (sessiontypetwo.west) ++ (0.5em, 0) node[anchor=east] (sessiontypethree) {\&};
	\path (sessiontypethree.west) ++ (0.5em, 0) node[anchor=east,color=xkcdCoral] (sessiontypefour) {\emph{initial}};
    \draw [<-,color=xkcdCoral](s.north) |- ([xshift=-0.7ex,yshift=-0.2ex,color=xkcdCoral]sessiontypefour.south east);
    \draw [<-,color=xkcdLeafGreen](sprime.north) |- ([xshift=1ex,yshift=-0.2ex,color=xkcdLeafGreen]sessiontypetwo.south west);

	\path (sessiontypefour.south east) ++ (-6em,-0.2ex) node[anchor=south east,color=xkcdTeal] (identifiertext){dual \pid};
    \draw [<-,color=xkcdTeal](identifier.north) |- ([xshift=0.5ex,color=red]identifiertext.south west);

	\path (variableenvone.south east) ++ (5em,0) node[anchor=south east,color=xkcdDarkMagenta] (termtext){term};
    \draw [<-,color=xkcdDarkMagenta](t.south) |- ([xshift=0.5ex,yshift=-0.38ex,color=red]termtext.south west);
	
	\path (variableenvone.south east) ++ (9.3em,-0.4ex) node[anchor=south west,color=xkcdAzure] (termtext){expression type};
    \draw [<-,color=xkcdAzure](T.south) |- ([xshift=-0.5ex,yshift=0.01ex,color=red]termtext.south east);

\end{tikzpicture}
\vspace{9.5mm}
\end{minipage}
%
This judgement
states that ``the term $t$ can produce a value of type $T$ after following an interaction protocol starting from the initial session type $S$ up to the residual session type $S'$, while interacting with a dual process with \pid identifier \identifier.
This typing is valid under some \emph{session typing} environment $\Delta$, \emph{variable binding} environment $\Gamma$ and \emph{function information} environment $\Sigma$.''
Since the \emph{function information} environment $\Sigma$ is static for the whole module (and by extension, for all sub-terms), it is left implicit in the term typing rules.
We consider four main rules, and relegate the rest to \Cref{sec:appendix-term-typing-rules}.
\begin{prooftree}
	\AxiomC{$  \forall i \in I \qquad \forall j \in 1..n \qquad \typingpattern{\identifier}{p_i^j}{T_i^j}{\Gamma_i^j} \qquad \typing{\Delta}{\big( \Gamma, \Gamma_i^1 \dotsspace \Gamma_i^n \big)}{\identifier}{S_i}{t_i}{T}{S'} $}
	\LeftLabel{\prooflabel{tBranch}}
	\UnaryInfC{$ \typing{\Delta}{\Gamma}{\identifier}{\stsessionbranch{? \textlabel{l}_i \big(\widetilde{T_i}\big) . S_i}{ i \in I}}{\stexpreceive{\left\{ \atom{l}_i, \widetilde{p_i} \right\}}{t_i}{i \in I }}{T}{S'} $}
\end{prooftree}
The \texttt{receive} construct is typechecked using the $\prooflabel{tBranch}$ rule.
%
It expects an (external) branching session type $\stsessionbranch{\dots}{}$, where each branch in the session type must match with a corresponding branch in the \texttt{receive} construct, where \emph{both} the labels ($\texttt{l}_i$) and payload types ($\widetilde{T}_i$) correspond.
The types within each \texttt{receive} branch are computed using the pattern typing judgement, $\typingpattern{\identifier}{p}{T}{\Gamma}$, which assigns types to variables present in patterns (see \Cref{sec:appendix-pattern-typing-rules}).
Each \texttt{receive} branch is then checked \wrt the common type $T$ and a common residual session type $S'$.
\begin{prooftree}
	\AxiomC{$ \exists i \in I $}	
	\AxiomC{$ \texttt{l} = \texttt{l}_i $}
	\AxiomC{$ \forall j \in 1..n $}
	\AxiomC{$ \typingexpression{\Gamma}{e_j}{T_i^j} $}
	\LeftLabel{\prooflabel{tChoice}}
	\QuaternaryInfC{$ \typing{\Delta}{\Gamma}{\identifier}{\stsessionchoice{! \textlabel{l}_i \big(\widetilde{T_i}\big) . S_i}{i \in I}}{\stexpsend{{\identifier}}{\left\{\atom{l}, e_1 \dotsspace e_n\right\}}}{ \left\{ \texttype{atom}, T_i^1 \dotsspace T_i^n\right\}}{S_i} $}
\end{prooftree}
The rule $\prooflabel{tChoice}$ typechecks the sending of messages.
This rule requires an internal choice session type $\stsessionchoice{\dots}{}$, where the label tagging the message to be sent must match with one of the labels ($\texttt{l}_i$) offered by the session choice.
The message payloads must also match with the corresponding types associated with the label  ($\widetilde{T}_i$ of $\texttt{l}_i$) stated via the expression typing judgement $\typingexpression{\Gamma}{e}{T}$ (see \Cref{sec:appendix-expression-typing-rules}).
The typing rule also checks the \pid of the addressee of the \texttt{send} statement which must match with the dual \pid ($\identifier$) states in the judgment itself
%
to ensure that messages are only sent to the correct addressee.
\begin{prooftree}
	\AxiomC{$ \Delta \left(\fun{f}{n}\right) = S \qquad \forall i \in 2..n \cdot \left\{ \typingexpression{\Gamma}{e_i}{T_i} \right\}$}
	\noLine
	\UnaryInfC{$ \Sigma \left(\fun{f}{n}\right) = \Omega \qquad \Omega.\texttt{return\_type} = T \qquad \Omega.\texttt{param\_types} = \widetilde{T} $}
	\LeftLabel{\prooflabel{tRecKnownCall}}
	\UnaryInfC{$ \typing{\Delta}{\Gamma}{\identifier}{S}{\stexpfunction{f}{\identifier, \; e_2 \dotsspace e_{n}}}{T}{\stend} $}
\end{prooftree}
Since public functions are decorated with a session type explicitly using the \annotation{session} (or \annotation{dual}) annotation, they are listed in $\dom{\Delta}$. 
Calls to public functions are typechecked using the \linebreak \prooflabel{tRecKnownCall} rule, which verifies that the expected initial session type is equivalent to the function's \emph{known} session type ($S$) obtained from the \textit{session typing} environment, \ie $\Delta \left(\fun{f}{n}\right) = S$.
Without typechecking the function's body, which is done in rule \prooflabel{tModule}, this rule ensures that the parameters have the correct types (using the expression typing rules). 
From the check performed in rule \prooflabel{tModule}, it can also safely assume that this session type $S$ is fully consumed, thus the residual type becomes $\stend$.
Rule \prooflabel{tRecKnownCall} also ensures that the \pid (\identifier) is preserved during a function call, by requiring it to be passed as a parameter and comparing it to the expected dual \pid (\ie $\typing{\Delta}{\Gamma}{\highlight{xkcdTeal}{\identifier}}{S}{\stexpfunction{f}{\highlight{xkcdTeal}{\identifier}, \dots}}{T}{\stend}$). 
%
%
\begin{prooftree}
	\AxiomC{$ \Sigma\left(\fun{f}{n}\right) = \Omega \qquad \qquad \fun{f}{n} \notin \dom{\Delta} \qquad \qquad \Omega.\texttt{dual} = \dualpid $}
	\noLine
	\UnaryInfC{$ \Omega.\texttt{params} = \widetilde{x} \quad\; \Omega.\texttt{param\_type} =  \widetilde{T} \quad\; \Omega.\texttt{body} = t \quad\; \Omega.\texttt{return\_type} = T  $}
	\noLine
	\UnaryInfC{$  $}	
	\noLine
	\UnaryInfC{$ \typing{\left(\Delta, \fun{f}{n} : S \right)}{\big(\Gamma, \dualpid : \texttype{pid}, \widetilde{x} : \widetilde{T}\big)}{\dualpid}{S}{t}{T}{S'} \qquad \;\; \forall  i \in 2..n \cdot \left\{ \typingexpression{\Gamma}{e_i}{T_i} \right\} $}
	\LeftLabel{\prooflabel{tRecUnknownCall}}
	\UnaryInfC{$ \typing{\Delta}{\Gamma}{\identifier}{S}{\stexpfunction{f}{\identifier, \; e_2 \dotsspace e_{n}}}{T}{S'} $}
\end{prooftree}
Contrastingly, a call to a (private) function, $\fun{f}{n}$, with an \emph{unknown} session type associated to it is typechecked using the \prooflabel{tRecUnknownCall} rule.
As in the other rule, it ensures that parameters have the correct types ($ \typingexpression{\Gamma}{e_i}{T_i}$).
However, it also analyses the function's body $t$ (obtained from $\Sigma$) with respect to the session type $S$ inherited from the initial session type of the call, 
Furthermore, this session type is appended to the \emph{session typing} environment $\Delta$ for future reference, \ie \mbox{$\Delta' = (\Delta, \fun{f}{n} : S)$} which allows it to handle recursive calls to itself; should the function be called again, rule \prooflabel{tRecKnownCall} is used thus bypassing the need to re-analyse its body.

\subsection{\Elixir System}


The \ElixirST provides a bespoke spawning function called \texttt{session/4} which allows the initiation of two concurrent processes executing in tandem as part of a session. 
This \texttt{session/4} function takes two pairs of arguments: two references of function names (that will be spawned), along with their list of arguments.
Its participant creation flow is shown in \Cref{fig:spawning}.  
Initially the actor (\textsf{pre-server}) is spawned, passing its \pid ($\pidvalue_{\textit{server}}$) to the second spawned actor (\textsf{pre-client}).
Then, \textsf{pre-client} relays back its \pid ($\pidvalue_{\textit{client}}$) to \textsf{pre-server}.
In this way, both actors participating in a session become aware of each other's \pids.
From this point onwards, the two actors execute their respective function to behave as the participants in the binary session; the first argument of each running function is initiated to the respective \pid of the other participant.
\Cref{fig:spawning} shows that the server process executes the body $t$, where it has access to the mailbox $\mathcal{M}$.
As it executes, messages may be sent or received (shown by the action $\alpha$) 
and stored in the mailbox $\mathcal{M}'$.
The specific working of these transitions is explained in the following section.

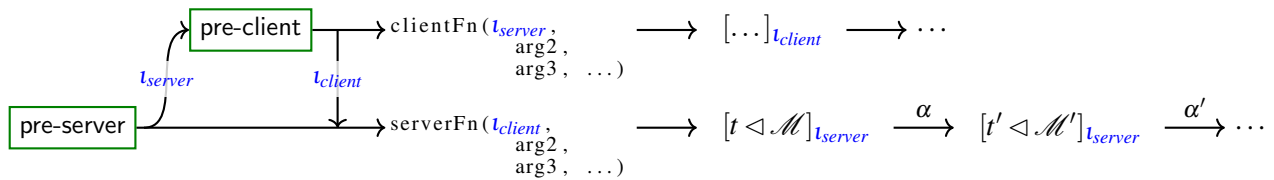
\begin{figure}[H]
	\hspace{-4mm}	
	\begin{tikzpicture}
		\node (pre-server) [spawn]{pre-server};
		\node (pre-client-above) [above=1.3cm of pre-server.east, anchor=east]{};
		\node (pre-client) [spawn, right=0.7cm of pre-client-above]{pre-client};
		\node (pre-client-client) [right=0.2cm of pre-client]{};
		\node (pre-client-client-below) [below=1.3cm of pre-client-client.center]{};
		
		\node (client-func) [right=5mm of pre-client-client]{};

		\node[xshift=1.5cm, yshift=-2.3mm] (client-func-code) at (client-func)
		{\begin{lstlisting}[numbers=none, frame=none, escapechar=`, basicstyle=\scriptsize\linespread{0.55}\selectfont, linewidth=3.4cm]
clientFn(`\footnotesize$\blueColor{\pidvalue_{\textit{server}}}$`, 
           arg2, 
           arg3, ...)
		\end{lstlisting}};

		\node (server-func) [below=1.3cm of client-func.west, anchor=west]{};
		\node[xshift=1.5cm, yshift=-2.3mm] (server-func-code) at (server-func) {\begin{lstlisting}[numbers=none, frame=none, escapechar=`, basicstyle=\scriptsize\linespread{0.55}\selectfont] 
serverFn(`\footnotesize$\blueColor{\pidvalue_{\textit{client}}}$`, 
           arg2, 
           arg3, ...)
		\end{lstlisting}};

		\draw[thick,->] (pre-server.east) to (server-func.west);
		\draw[thick,->] (pre-server.east) + (1mm, 0mm) to [out=0, in=180]  node[fill=white, text opacity=1,fill opacity=0.8]{\footnotesize$\blueColor{\pidvalue_{\textit{server}}}$} (pre-client.west);

		\draw[thick,->] (pre-client.east) to (client-func.west);
		\draw[thick,->] (pre-client-client.center) to node[fill=white, text opacity=1,fill opacity=0.8]{\footnotesize$\blueColor{\pidvalue_{\textit{client}}}$} (pre-client-client-below.north);

		\node (server-cont-server) [right=28mm of server-func]{};
		\node (server-cont-call-one) [right=8mm of server-cont-server]{\hspace*{2mm}$[t \lhd \mathcal{M}]_\text{\footnotesize$\blueColor{\pidvalue_{\textit{server}}}$}$};
		\draw[thick,->] (server-cont-server) to (server-cont-call-one);
		\node (client-cont-server) [right=28mm of client-func]{};
		\node (client-cont-client-dots) [right=8mm of client-cont-server]{\hspace*{2mm}$[\dots]_\text{\footnotesize$\blueColor{\pidvalue_{\textit{client}}}$}$};
		\draw[thick,->] (client-cont-server) to (client-cont-client-dots);

		\node (server-action) [right=-1mm of server-cont-call-one]{};
		\node (server-action-pid) [right=8mm of server-action]{\hspace*{2mm}$[t' \lhd \mathcal{M}']_\text{\footnotesize$\blueColor{\pidvalue_{\textit{server}}}$}$};
		\draw[thick,->] (server-action) -- (server-action-pid) node[above,text centered,midway,font=\footnotesize] {$\alpha$};
		

		\node (server-cont-server-dots) [right=-1mm of server-action-pid]{};
		\node (server-cont-client-dots) [right=8mm of server-cont-server-dots]{\dots};
		\draw[thick,->] (server-cont-server-dots) -- (server-cont-client-dots)  node[above,text centered,midway,font=\footnotesize] {$\alpha'$};
		\node (client-cont-server-dots) [right=-1mm of client-cont-client-dots]{};
		\node (client-cont-client-dots) [right=8mm of client-cont-server-dots]{\dots};
		\draw[thick,->] (client-cont-server-dots) to (client-cont-client-dots);

	\end{tikzpicture}
	\caption{Spawning two processes (\greenColor{green} boxes represent \emph{spawned} concurrent processes)}
	\label{fig:spawning}
\end{figure}

\section{Operational Semantics}
\label{sec:semantics}

We describe the operational semantics of the \Elixir language subset of \Cref{fig:elixir-syntax} as a \emph{labelled transition system} (LTS)~\cite{DBLP:journals/cacm/Keller76} describing how a handler process within a session executes while interacting with the session client, as outlined in \Cref{fig:roadmap}.   
The transitions  $\termsem{\alpha}{t}{t'}$ describes the fact that a handler process in state $t$ performs an execution step to transition to the new state $t'$ while producing action $\alpha$ as a side-effect.
External actions are visible by, and bear an effect on the client, whereas internal actions do not.
In our case, an action $\alpha$ can take the following forms: 
\begin{center}
	\begin{tikzpicture}
	\hspace*{-4mm}
	\node (actions) {
	\begin{minipage}{130mm}
	\begin{align*}
		\alpha \in \textsc{Act} \; \Coloneqq & \;\; 
		\actionsend{\pidvalue}{\tupletype{\atom{l}, \; \widetilde{v}}} & \text{Output message to $\pidvalue$ tagged as $\atom{l}$ with payload $\widetilde{v}$} 
    \\
		| & \;\; \actionreceive{\tupletype{\atom{l}, \; \widetilde{v}}} & \text{Input message tagged as $\atom{l}$ with payload $\widetilde{v}$} 
    \\
		| & \;\; \fun{f}{n} & \text{Call function $f$ with arity $n$}\\
		| & \;\; \tau & \text{Internal reduction step}
	\end{align*}
	\end{minipage}%
	};

	\draw [decorate,very thick,decoration = {brace,raise=-7mm,amplitude=2mm}] ([yshift=-8mm,xshift=1mm]actions.north east) --  ([yshift=-16mm,xshift=1mm]actions.north east) node[midway,left,xshift=21mm] () {external action};
	
	\draw [decorate,very thick,decoration = {brace,raise=-7mm,amplitude=2mm}] ([yshift=-19mm,xshift=1mm]actions.north east) --  ([yshift=-27mm,xshift=1mm]actions.north east) node[midway,left,xshift=21mm] () {internal action};
	\end{tikzpicture}
\end{center}
Both output and input actions constitute external actions that affect either party in a session; the type system from \Cref{sec:behavioural-typing} disciplines these external actions.
Internal actions, include \emph{silent} transition ($\tau$) and function calls ($\fun{f}{n}$); although the latter may be formalised as a silent action, the decoration facilitates our technical development. 
We note that, function calls can only transition subject to a well-formed \emph{function information} environment ($\Sigma$), which contains details about all the functions available in the module.
Since $\Sigma$ remains static during transitions, we leave it implicit in the transitions rules.

\begin{figure}[!b] 
	\begin{flushleft}
		\framebox[10px+\width]{$ \termsemm{\alpha}{\Sigma}{t}{t'} $}
	\end{flushleft}
	\vspace{-3.6\baselineskip}
	%
	\begin{center}
		\hspace*{5mm}
		\AxiomC{$ \termsem{\alpha}{t_1}{t_1'} $}	
		\LeftLabel{\prooflabell{rLet}{1}}
		\UnaryInfC{$ \termsem{\alpha}{\stexpsequence{x}{t_1}{t_2}}{\stexpsequence{x}{t_1'}{t_2}} $}
		\DisplayProof
		\hspace*{9mm}
		\AxiomC{$\vphantom{\termsem{\alpha}{t_1}{t_1'}} $}	
		\LeftLabel{\prooflabell{rLet}{2}}
		\UnaryInfC{$ \termsem{\tau}{\stexpsequence{x}{v}{t}}{t} \substitutionvx $}
		\DisplayProof
	\end{center}
%

	\begin{prooftree}
		\hspace*{-15mm}
		\AxiomC{$ \expressionsem{e_k}{e_k'} $}	
		\LeftLabel{\prooflabell{rChoice}{1}}
		\UnaryInfC{$ \termsem{\tau}{\stexpsend{\pidvalue}{\left\{\atom{l}, v_1 \dotsspace v_{k-1}, \; e_k \dotsspace e_n\right\}}}{\stexpsend{\pidvalue}{\left\{\atom{l}, v_1 \dotsspace v_{k-1}, \; e_k' \dotsspace e_n\right\}}}$}
	\end{prooftree}

	\begin{prooftree}
		\AxiomC{$\vphantom{\termsem{\alpha}{t_1}{t_1'}} $}	
		\LeftLabel{\prooflabell{rChoice}{2}}
		\UnaryInfC{$ \termsem{\actionsend{\pidvalue}{\tupletype{\atom{l}, \; v_1 \dotsspace v_n}}}{\stexpsend{\pidvalue}{\left\{\atom{l}, v_1 \dotsspace v_n\right\}}}{ \{ \atom{l}, v_1 \dotsspace v_n \} } $}
	\end{prooftree}


	\begin{prooftree}
		\AxiomC{$ \exists j \in I $}
		\AxiomC{$ \texttt{l}_j = \texttt{l} $}
		\AxiomC{$ \match{\widetilde{p_j}, \, v_1 \dotsspace v_n} = \sigma $}
		\LeftLabel{\prooflabel{rBranch}}
		\TrinaryInfC
    {$ 
    \termsem{
      \actionreceive{\tupletype{\atom{l}, \; v_1 \dotsspace v_n}}
      }
    {\stexpreceive{\left\{ \atom{l}_i, \widetilde{p_i} \right\}}{t_i}{i \in I}}
    { t_j \sigma } $}
	\end{prooftree}

	\begin{prooftree}
		\AxiomC{$ \expressionsem{e_k}{e_k'} $}
		\LeftLabel{\prooflabell{rCall}{1}}
		\UnaryInfC{$ \termsem{\tau}{\stexpfunction{f}{v_1 \dotsspace v_{k-1}, \; e_k \dotsspace e_n}}{\stexpfunction{f}{v_1 \dotsspace v_{k-1}, \; e_k' \dotsspace e_n}} $}
	\end{prooftree}

	\begin{prooftree}
		\AxiomC{$ \Sigma \left(\fun{f}{n}\right) = \Omega $}
		\AxiomC{$ \Omega.\texttt{body} = t $}
		\AxiomC{$ \Omega.\texttt{params} = x_2 \dotsspace x_n $}
		\AxiomC{$ \Omega.\texttt{dual} = y $}
		\LeftLabel{\prooflabell{rCall}{2}}
		\QuaternaryInfC{$ \termsem{\fun{f}{n}}{\stexpfunction{f}{\pidvalue, v_2 \dotsspace v_n}}{t \substitution{\pidvalue}{y} \substitution{v_2 \dotsspace v_n}{x_2 \dotsspace x_n}} $}
	\end{prooftree}

	\begin{prooftree}
		\AxiomC{$ \expressionsem{e}{e'} $}
		\LeftLabel{\prooflabell{rCase}{1}}
		\UnaryInfC{$ \termsem{\tau}{\stexpcase{e}{p_i}{t_i}{i \in I}}{\stexpcase{e'}{p_i}{t_i}{i \in I}} $}
	\end{prooftree}

	\begin{center}
		\AxiomC{$ \exists j \in I $}
		\AxiomC{$ \match{p_j, v} = \sigma $}
		\LeftLabel{\prooflabell{rCase}{2}}
		\BinaryInfC{$\termsem{\tau}{\stexpcase{v}{p_i}{t_i}{i \in I}}{t_j \sigma} $}
		\DisplayProof 
		\hspace*{12mm}
		\AxiomC{$ \vphantom{\exists j \in I \match{p_j, v} = \sigma} \expressionsem{e}{e'}$}	
		\LeftLabel{\prooflabel{rExpression}}
		\UnaryInfC{$ \termsem{\tau}{e}{e'} $}
		\DisplayProof
	\end{center}

	\caption{Term transition semantic rules}
	\label{fig:term-semantics}
\end{figure}

The transitions are defined by the \emph{term} transition rules listed  in \Cref{fig:term-semantics}.
%
Rules \prooflabell{rLet}{1} and \prooflabell{rLet}{2} deal with the evaluation of a \emph{let} statement, \stexpsequence{x}{t_1}{t_2} modelling a  \emph{call-by-value} semantic, where the first term $t_1$ has to transition fully to a value before being substituted for $x$ in $t_2$
denoted as \substitution{v}{x} (or \substitution{v_1, v_2}{x_1, x_2} for multiple substitutions).
%
The \emph{send} statement, \stexpsend{\pidvalue}{\left\{\atom{l}, e_1 \dotsspace e_n\right\}}, evaluates by first reducing each part of the message to a value from left to right.
This is carried out via rule \prooflabell{rChoice}{1} which produces no observable side-effects.
When the whole message is reduced to a tuple of values $\tupletype{\atom{l}, v_1 \dotsspace v_n}$, rule \prooflabell{rChoice}{2} performs the actual message sending operation.
This transition produces an action $\actionsend{\pidvalue}{\tupletype{\atom{l}, v_1 \dotsspace v_n}}$, where the message $\tupletype{\atom{l}, v_1 \dotsspace v_n}$ is sent to the interacting process, which has a \pid value of $\pidvalue$.
The operational semantics of the \emph{receive} construct, \stexpreceive{\left\{ \atom{l}_i,\widetilde{p_i} \right\}}{t_i}{i \in I}, is defined by rule \prooflabel{rBranch}.
When a message is received (\ie $\alpha = \, \actionreceive{\tupletype{\atom{l}, \; \widetilde{v}}}$), it is matched with a valid branch from the \emph{receive} construct, using the label $\atom{l}$. 
Should one of the labels match ($\exists {j\in I}$  such that $\atom{l}_j = \atom{l}$), the payload of the message ($\widetilde{v}$) is compared to the corresponding patterns in the selected branch ($\widetilde{p_j}$) using $\match{\widetilde{p_j}, \widetilde{v}}$.
If the values match with the pattern, the \matchname function (\Cref{def:pattern-matching}) produces the substitutions $\sigma$, 
mapping the matched variables in the pattern  $\widetilde{p_j}$ to values from $\widetilde{v}$.
This substitution $\sigma$ is then used to instantiate the free variables in continuation branch $t_j$.

\begin{definition}[Pattern Matching]
    \label{def:pattern-matching}
    The \matchname function pairs patterns with a corresponding value, resulting in a sequence of substitutions (called $\sigma$), \eg, $ \match{p, v} = \substitution{v_1}{x_1} \substitution{v_2}{x_2} = \substitution{v_1, v_2}{x_1, x_2}$. 
    Note that, a sequence of \matchname outputs are combined together, where the empty substitutions (\ie $ \substitutionsingle{\;} $) are ignored. 
		The match function builds a meta-list of substitutions, which is a different form of lists defined by the \Elixir syntax in \Cref{fig:elixir-syntax}.
	%
	\begin{align*}
        \hspace{12mm}
        \match{\widetilde{p}, \widetilde{v}} &\defsymbol \match{p_1, v_1} \dotsspace \match{p_n, v_n}\\[-1mm] 
		& \hspace*{45mm} \textit{where } \widetilde{p} = p_1 \dotsspace p_n \textit{ and } \widetilde{v} = v_1 \dotsspace v_n
		\\
        \match{p, v} &\defsymbol         
    \begin{cases}
        \substitutionsingle{\;} & p = b, v = b \textit{ and } p = v \\
        \substitutionvx & p = x \\
        \match{w_1, v_1}, \; \match{w_2, v_2} & p = \listtypee{w_1}{w_2}, v = \listtypee{v_1}{v_2} \\
        \match{w_1, v_1} \dotsspace \match{w_n, v_n} & p = \tupletype{w_1 \dotsspace w_n} \textit{ and}\\[-1mm] 
		&  \hspace{17mm} v = \tupletype{v_1 \dotsspace v_n} \hspace{12.9mm} \qedhere
    \end{cases}
\end{align*}
\end{definition}

\begin{figure}[t] 
	\begin{flushleft}
		\framebox[10px+\width]{$ \expressionsem{e}{e'} $}
	\end{flushleft}
	\vspace*{-9mm}
	\begin{center}
		\AxiomC{$ \expressionsem{e_1}{e_1'}   $}
		\LeftLabel{\prooflabell{reOperation}{1}}
		\UnaryInfC{$ \expressionsem{\stexpbinaryop{e_1}{\diamond}{e_2}}{e_1' \diamond e_2} $}
		\DisplayProof
		\hspace{1cm}
		\AxiomC{$ \expressionsem{e_2}{e_2'}   $}
		\LeftLabel{\prooflabell{reOperation}{2}}
		\UnaryInfC{$ \expressionsem{\stexpbinaryop{v_1}{\diamond}{e_2}}{v_1 \diamond e_2'} $}
		\DisplayProof

		\vspace*{7mm}
		\AxiomC{$ v = \stexpbinaryop{v_1}{\diamond}{v_2}$}
		\LeftLabel{\prooflabell{reOperation}{3}}
		\UnaryInfC{$ \expressionsem{\stexpbinaryop{v_1}{\diamond}{v_2}}{v} $}
		\DisplayProof
		\hspace{1mm}
		\AxiomC{$ \expressionsem{e}{e'}  $}
		\LeftLabel{\prooflabell{reNot}{1}}
		\UnaryInfC{$ \expressionsem{\stexpnot{e}}{e'} $}
		\DisplayProof
		\hspace{1mm}
		\AxiomC{$ v' = \neg v  $}
		\LeftLabel{\prooflabell{reNot}{2}}
		\UnaryInfC{$ \expressionsem{\stexpnot{v}}{v'} $}
		\DisplayProof

		\vspace*{7mm}
		\AxiomC{$ \expressionsem{e_1}{e_1'} $}
		\LeftLabel{\prooflabell{reList}{1}}
		\UnaryInfC{$ \expressionsem{\listtypee{e_1}{e_2}}{\listtypee{e_1'}{e_2}} $}
		\DisplayProof
		\hspace{1cm}
		\AxiomC{$ \expressionsem{e_2}{e_2'} $}
		\LeftLabel{\prooflabell{reList}{2}}
		\UnaryInfC{$ \expressionsem{\listtypee{v_1}{e_2}}{\listtypee{v_1}{e_2}} $}
		\DisplayProof

		\vspace*{7mm}
		\AxiomC{$ \expressionsem{e_k}{e_k'} $}
		\LeftLabel{\prooflabel{reTuple}}
		\UnaryInfC{$ \expressionsem{\tupletype{v_1 \dotsspace v_{k-1}, \; e_k \dotsspace e_n}}{\{v_1 \dotsspace, v_{k-1}, \; e_k' \dotsspace e_n\}} $}
		\DisplayProof
	\end{center}

	\mycomment{
		todo in write up add:
		For reTuple, if the paramaters are all expressions, then k = 1 and there are no values on the LHS
	}

	\caption{Expression reduction rules}
	\label{fig:expression-semantics}
\end{figure}

\begin{example}
	For the pattern $p_1 = \tupletype{x, 2, y}$ and the value tuple $v_1 =\tupletype{8, 2, \basicvalues{true}}$,
	 $ \match{p_1, v_1} = \sigma $ where $ \sigma = \substitution{8}{x} \substitution{\basicvalues{true}}{y} $ (written also as $ \sigma = \substitution{8, \basicvalues{true}}{x, y} $).
	However for pattern $p_2 = \tupletype{x, 2, \basicvalues{false}}$, the operation $ \match{p_2, v_1} $ fails, since $p_2$ expects a \basicvalues{false} value as the third element, but finds a \basicvalues{true} value instead.
	\qedhere
\end{example}

Using rule \prooflabell{rCall}{1} from \Cref{fig:term-semantics}, a function call is evaluated by first reducing all of its parameters to a value, using the expression reduction rules in \Cref{fig:expression-semantics}; again this models a call-by-value semantics.
Once all arguments have been fully reduced, \prooflabell{rCall}{2}, the implicit environment $\Sigma$ is queried for function $f$ with arity $n$ to fetch the function's parameter names and body.
This results in a transition to the function body with its parameters instantiated accordingly, $t \substitution{\pidvalue}{y} \substitution{v_2 \dotsspace v_n}{x_2 \dotsspace x_n}$, decorated by the function name, \ie $\alpha = \fun{f}{n}$.
Along the same lines a case construct first reduces the expression which is being matched using rule \prooflabell{rCase}{1}.
Then, rule \prooflabell{rCase}{2} matches the value with the correct branch, using the \matchname function, akin to \prooflabel{rBranch}.
Whenever a term consists solely of an expression, it silently reduces using \prooflabel{rExpression} using the expression reduction rules $\expressionsem{e}{e'}$ of \Cref{fig:expression-semantics}.
These are fairly standard.


\section{Session Fidelity}
\label{sec:session-fidelity}
We validate the static properties imposed by the \ElixirST type system~\cite{DBLP:conf/agere/TaboneF21}, overviewed in \Cref{sec:behavioural-typing}, by establishing a relation with the runtime behaviour of a typechecked \Elixir program, using the transition semantics defined in \Cref{sec:semantics}.
%
Broadly, we establish a form of \emph{type preservation}, which states that if a well-typed term transitions, the resulting term then remains well-typed~\cite{DBLP:books/daglib/0005958}.
However, our notion of type preservation,  needs to be stronger to also take into account
\begin{inparaenum}
  \item the side-effects produced by the execution; and 
  \item the progression of the execution with respect to protocol expressed as a session type.
\end{inparaenum}
Following the long-standing tradition in the session type community, these two aspects are captured by the refined preservation property called \emph{session fidelity}.
%
%
This property ensures that:
\begin{inparaenum}
    \item the communication action produced as a result of the execution of the typed process is one of the actions allowed by the current stage of the protocol; and that
    \item the resultant process following the transition is still well-typed \wrt the remaining part of the protocol that is still outstanding.
\end{inparaenum}
%
%

Before embarking on the proof for session fidelity, we prove an auxiliary proposition that acts as a sanity check for our operational semantics. 
We note that the operational semantics of \Cref{sec:semantics} assumes that only \emph{closed} programs are executed;
an \emph{open} program (\ie a program containing free variables) is seen as an incomplete program that cannot execute correctly due to missing information.
To this end, \Cref{prop:closed-term} ensures that a closed term \emph{remains closed} even after transitioning.

\begin{restatable}[Closed Term]{proposition}{closedtermprop}
    \label{prop:closed-term}
    If $ \fv{t} = \emptyset $ and $ \termsem{\alpha}{t}{t'} $, then $ \fv{t'} = \emptyset $
\end{restatable}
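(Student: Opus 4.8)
The plan is to proceed by induction on the derivation of the transition $\termsem{\alpha}{t}{t'}$, following the structure of the term transition rules in \Cref{fig:term-semantics}. The base cases and inductive steps correspond one-to-one with the rules \prooflabell{rLet}{1}, \prooflabell{rLet}{2}, \prooflabell{rChoice}{1}, \prooflabell{rChoice}{2}, \prooflabel{rBranch}, \prooflabell{rCall}{1}, \prooflabell{rCall}{2}, \prooflabell{rCase}{1}, \prooflabell{rCase}{2} and \prooflabel{rExpression}. In each case the hypothesis $\fv{t} = \emptyset$ gives us control over the free variables of all sub-terms, and we must show the contractum $t'$ is likewise closed.

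For the purely structural/congruence rules (\prooflabell{rLet}{1}, \prooflabell{rChoice}{1}, \prooflabell{rCall}{1}, \prooflabell{rCase}{1}, \prooflabel{rExpression}), the argument is routine: the redex is a sub-term of $t$, so it is closed (or, for \prooflabell{rLet}{1}, $\fv{t_1}\subseteq\fv{t}=\emptyset$ since the binder $x$ scopes only over $t_2$); the inductive hypothesis (or an analogous sub-lemma for the expression reduction $\expressionsem{e}{e'}$, which one either states separately or folds in, noting expressions have no binders) gives that the reduct is closed; and $\fv{}$ distributes over the surrounding context, so $t'$ is closed. The cases that carry the real content are those that perform a substitution: \prooflabell{rLet}{2} (substituting $v$ for $x$ in $t$), \prooflabel{rBranch} and \prooflabell{rCase}{2} (applying the matching substitution $\sigma$ to the chosen branch $t_j$), and \prooflabell{rCall}{2} (substituting $\pidvalue, \widetilde{v}$ for the parameters $y,\widetilde{x}$ in the function body). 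Here I would use the general fact that for a term $s$ and a substitution $\rho$ mapping variables to closed values, $\fv{s\rho} = \fv{s}\setminus\dom{\rho}$ — so to conclude $\fv{s\rho}=\emptyset$ it suffices that $\fv{s}\subseteq\dom{\rho}$ and that every value in the range of $\rho$ is closed.

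Discharging those preconditions is where the obstacle lies, and it splits by case. For \prooflabell{rLet}{2}: $v$ is a sub-term of the closed $t$, hence closed, and $\fv{t}\setminus\{x\}\subseteq\fv{\stexpsequence{x}{v}{t}}=\emptyset$ gives $\fv{t}\subseteq\{x\}=\dom{[\sfrac{v}{x}]}$. For \prooflabel{rBranch} and \prooflabell{rCase}{2}: the incoming values $\widetilde{v}$ (in the action) are values, hence closed, so by \Cref{def:pattern-matching} every value in $\sigma$ is closed; and $\dom{\sigma}$ is exactly the set of variables occurring in the pattern $\widetilde{p_j}$ (resp.\ $p_j$), which are precisely the binders for $t_j$, so $\fv{t_j}\setminus\dom{\sigma}\subseteq\fv{t}=\emptyset$ — this last inclusion needs the (standard, and implicit in the syntax of \Cref{fig:elixir-syntax}) observation that in the \texttt{receive}/\texttt{case} construct the pattern variables of branch $j$ bind into $t_j$. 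For \prooflabell{rCall}{2}: the arguments $\pidvalue,\widetilde{v}$ are values, hence closed; and $\fv{t}\subseteq\{y,x_2,\dots,x_n\}$ is exactly the well-formedness condition on $\Sigma$ stated in \Cref{sec:behavioural-typing} (namely $\fv{\Sigma(\fun{f}{n}).\texttt{body}}\setminus(\Sigma(\fun{f}{n}).\texttt{params}\cup\Sigma(\fun{f}{n}).\texttt{dual})=\emptyset$), which we are entitled to assume. The main subtlety to flag is therefore the reliance on this well-formedness assumption for the \prooflabell{rCall}{2} case — without it a function body could reference stray variables and the proposition would fail — together with the need to be careful that the values appearing in input actions and the matching substitution really are closed, which follows because the grammar of values $v$ in \Cref{fig:elixir-syntax} contains no variables. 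Everything else is a straightforward bookkeeping of $\fv{}$ through the term constructors.
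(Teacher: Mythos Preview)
Your proposal is correct and takes essentially the same approach as the paper, which states only ``By induction on the structure of $t$'' and defers the details; since the transition rules are syntax-directed, your induction on the derivation amounts to the same case analysis. You correctly isolate the substantive cases (the substitution-performing rules) and the key ingredients needed there --- the standard fact $\fv{s\rho}=\fv{s}\setminus\dom{\rho}$ for closed-value substitutions, the observation that the grammar of values contains no variables, and the well-formedness assumption on $\Sigma$ for \prooflabell{rCall}{2}.
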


\begin{proof}
    By induction on the structure of $t$. 
    \iftoggle{shortversion}{%
    }{%
    Refer to \Cref{sec:appendix-proofs-prop-1} for details.
    }%
    \qedhere
\end{proof}


The statement of the session fidelity property relies on the definition of a partial function called \aftername (\Cref{def:after}), which takes a session type and an action as arguments and returns another session type as a result.
This function serves two purposes: 
\begin{inparaenum}[(a)]
    \item the function $\after{S, \alpha}$ is only defined for actions $\alpha$ that are (immediately) permitted by the protocol $S$, which allows us to verify whether a term transition step violated a protocol or not; and
    \item since $S$ describes the current stage of the protocol to be followed, we need a way to evolve this protocol to the next stage should $\alpha$ be a permitted action, and this is precisely $S'$, the continuation session type returned where $\after{S, \alpha} = S'$.
\end{inparaenum}

\begin{definition}[After Function]
    \label{def:after}
    The \aftername function is partial function defined for the following cases:
    \begin{align*}
        \after{S, \tau} &\defsymbol S \\
        \after{S, \fun{f}{n}} &\defsymbol S \\
        \after{\stsessionchoice{!\texttt{l}_i \big(\widetilde{T_i} \big).S_i}{i \in I}, \pidvalue !\left\{\texttt{l}_j, \widetilde{v}\right\}} &\defsymbol S_j \quad \text{ where } j \in I  \\ 
        \after{\stsessionbranch{?\texttt{l}_i \big(\widetilde{T_i} \big).S_i}{i \in I}, ? \left\{\texttt{l}_j, \widetilde{v}\right\}} &\defsymbol S_j \quad \text{ where } j \in I 
    \end{align*}

    \noindent
    This function is undefined for all other cases.
    The \aftername function is overloaded to range over  \emph{session typing} environments ($\Delta$) in order to compute a new \emph{session typing} environment given some action $\alpha$ and session type $S$:
    \begin{align*}
        \after{\Delta, \fun{f}{n}, S} &\defsymbol \Delta, \fun{f}{n}:S \\
        \after{\Delta, \alpha, S} &\defsymbol \Delta  \qquad\qquad \text{if } \alpha \neq \fun{f}{n} 
    \end{align*}
    Intuitively, when the action produced by the transition is $\fun{f}{n}$, the \emph{session typing} environment is extended by the new mapping $\fun{f}{n}:S$.
    For all other actions, the \emph{session typing} environment remains unchanged.
    \qedhere
\end{definition}

Recall that module typechecking  using rule \prooflabel{tModule} entails typechecking the bodies of all the public functions \wrt their ascribed session type, $\typingsigma{\Delta}{\big(\dualpid : \texttype{pid}, \widetilde{x} : \widetilde{T}\big)}{\Sigma}{\dualpid}{S}{t}{T}{S'}$ (where $S' = \texttt{end}$ for this specific case).
At runtime, a spawned client handler process in a session starts running the function body term $t$ where the parameter variables $y,\widetilde{x}$ are instantiated with the PID of the client, say $\pidvalue$, and the function parameter values, say $\widetilde{v}$, respectively, $t\substitution{\pidvalue}{y}\substitution{\widetilde{v}}{\widetilde{x}}$, as modelled in rule \prooflabell{rCall}{2} from \Cref{fig:term-semantics}. 
The instantiated function body is thus closed and can be typed \wrt an empty variable binding environment, $\Gamma = \emptyset$.
Session fidelity thus states that if a  closed term $t$ is well-typed, \ie
\begin{equation}
    \label{eq:session-fidelity-left}
    \typing{\Delta}{\emptyset}{\identifier}{S}{t}{\highlight{xkcdAzure}{$T$}}{S'}
\end{equation}
(where $S$ and $S'$ are initial and residual session types, respectively, and $T$ is the basic expression type) 
and this term $t$ transitions to a new term $t'$ with action $\alpha$, \ie
\begin{equation}
    \label{eq:session-fidelity-transition}
    \termsem{\alpha}{t}{t'}
\end{equation}
the new term $t'$ remains well-typed, \ie
\begin{equation}
    \label{eq:session-fidelity-right}
    \typing{\Delta'}{\emptyset}{\identifier}{S''}{t'}{\highlight{xkcdAzure}{$T$}}{S'}
\end{equation}
where the evolved $S''$ and $\Delta'$ are computed as $\after{S, \alpha} = S''$ and $\after{\Delta, \alpha, S} = \Delta'$.
This ensures that the base type of the term is preserved (note the constant type \highlight{xkcdAzure}{$T$} in \cref{eq:session-fidelity-left,eq:session-fidelity-right}). 
Furthermore, it ascertains that the term $t$ follows an interaction protocol starting from the initial session type $S$ up to the residual session type $S'$ (\cref{eq:session-fidelity-left}), since the updated session type $S''$ is defined for $\after{S, \alpha}$.

\begin{restatable}[Session Fidelity]{theorem}{presthm}
  \label{thm:session-fidelity}
  If $\typingsigma{\Delta}{\emptyset}{\Sigma}{\identifier}{S}{t}{T}{S'}$ and $\termsemm{\alpha}{\Sigma}{t}{t'}$, then there exists some $S''$ and $\Delta'$, such that $\typingsigma{\Delta'}{\emptyset}{\Sigma}{\identifier}{S''}{t'}{T}{S'}$ for $\after{S, \alpha} = S''$ and $\after{\Delta, \alpha, S} = \Delta'$
\end{restatable}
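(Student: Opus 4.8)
The plan is to prove the statement by induction on the derivation of the typing judgement $\typingsigma{\Delta}{\emptyset}{\Sigma}{\identifier}{S}{t}{T}{S'}$ — equivalently, by induction on the structure of the closed term $t$, since the term typing rules are syntax-directed on $t$ — and, within each case, to perform a case analysis on the last rule used in the transition $\termsemm{\alpha}{\Sigma}{t}{t'}$. First I would isolate four auxiliary results. \emph{(i)} Subject reduction for the expression semantics: if $\typingexpressionsimplified{e}{T}$ and $\expressionsem{e}{e'}$, then $\typingexpressionsimplified{e'}{T}$; and values do not reduce. \emph{(ii)} A substitution lemma: if $\typing{\Delta}{\big(\Gamma, x : T_x\big)}{\identifier}{S}{t}{T}{S'}$ and $\typingexpressionsimplified{v}{T_x}$, then $\typing{\Delta}{\Gamma}{\identifier}{S}{t\substitutionvx}{T}{S'}$, generalised to simultaneous substitutions $\substitution{\widetilde{v}}{\widetilde{x}}$. \emph{(iii)} A pattern-matching lemma: if $\typingpattern{\identifier}{p}{T}{\Gamma_p}$, $\typingexpressionsimplified{v}{T}$ and $\match{p, v} = \sigma$, then $\sigma$ substitutes a value of type $T_x$ for each $x : T_x$ in $\Gamma_p$, so that \emph{(ii)} applies to $t\sigma$. \emph{(iv)} A weakening lemma for the session typing environment: extending $\Delta$ with a fresh mapping $\fun{f}{n} : S$ preserves every typing judgement — this is what absorbs the update $\after{\Delta, \fun{f}{n}, S} = \Delta, \fun{f}{n} : S$.

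With these in hand, most cases are routine. If $t$ is an expression, only \prooflabel{rExpression} applies, so $\alpha = \tau$ and $\after{S, \tau} = S$, $\after{\Delta, \tau, S} = \Delta$; the claim follows from \emph{(i)} and re-applying the term typing rule for expressions, which leaves the session type unchanged. If $t = \stexpsequence{x}{t_1}{t_2}$, inversion gives $\typing{\Delta}{\emptyset}{\identifier}{S}{t_1}{T_1}{S_1}$ and $\typing{\Delta}{(x : T_1)}{\identifier}{S_1}{t_2}{T}{S'}$: under \prooflabell{rLet}{1} the subterm $t_1$ transitions, so the induction hypothesis applies to $t_1$ and the \texttt{let} is reassembled (re-typing $t_2$ by \emph{(iv)} when $\alpha = \fun{f}{n}$); under \prooflabell{rLet}{2} we have $t_1 = v$, an expression, so $S_1 = S$ and \emph{(ii)} closes the case. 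If $t = \stexpsend{\pidvalue}{\left\{\atom{l}, e_1 \dotsspace e_n\right\}}$, the typing must end with \prooflabel{tChoice}, forcing $S = \stsessionchoice{! \textlabel{l}_i \big(\widetilde{T_i}\big) . S_i}{i \in I}$, with $\textlabel{l}$ equal to $\textlabel{l}_j$ for some $j \in I$, $T = \tupletype{\texttype{atom}, T_j^1 \dotsspace T_j^n}$ and $S' = S_j$: \prooflabell{rChoice}{1} is handled by \emph{(i)}, while \prooflabell{rChoice}{2} produces $\alpha = \actionsend{\pidvalue}{\tupletype{\atom{l}, \widetilde{v}}}$ with $\after{S, \alpha} = S_j = S'$, and the residual $\tupletype{\atom{l}, \widetilde{v}}$ is a value tuple typeable at $T$ with session type unchanged from $S_j$ to $S_j$, because each $v_k$ already carries the type $T_j^k$ required by \prooflabel{tChoice}. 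The \texttt{case} construct is similar: \prooflabell{rCase}{1} uses \emph{(i)} and \prooflabell{rCase}{2} uses \emph{(iii)} and \emph{(ii)}, throughout with $\alpha = \tau$ and the session type unchanged.

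The two delicate cases are \texttt{receive} and function call. For $t$ a \texttt{receive} construct, the typing must end with \prooflabel{tBranch}, forcing $S = \stsessionbranch{? \textlabel{l}_i \big(\widetilde{T_i}\big) . S_i}{i \in I}$ with premises $\typingpattern{\identifier}{p_i^k}{T_i^k}{\Gamma_i^k}$ and $\typing{\Delta}{\big(\Gamma_i^1 \dotsspace \Gamma_i^n\big)}{\identifier}{S_i}{t_i}{T}{S'}$; the transition \prooflabel{rBranch} selects $j \in I$ with $\textlabel{l}_j$ equal to the received label and $\match{\widetilde{p_j}, \widetilde{v}} = \sigma$, so $\alpha = \actionreceive{\tupletype{\atom{l}, \widetilde{v}}}$, $\after{S, \alpha} = S_j$ (which is defined precisely because inversion has pinned down the shape of $S$) and $\after{\Delta, \alpha, S} = \Delta$; combining the $j$-th branch premise with \emph{(iii)} and \emph{(ii)} yields $\typing{\Delta}{\emptyset}{\identifier}{S_j}{t_j\sigma}{T}{S'}$. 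This last step requires the received payload $\widetilde{v}$ to carry the declared types $\widetilde{T_j}$; since \prooflabel{rBranch} does not itself enforce this, it must come in either as a standing hypothesis on input actions or from a well-typedness invariant on the surrounding configuration (the interacting client being itself typed).

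Finally, for $t = \stexpfunction{f}{\identifier, e_2 \dotsspace e_n}$: \prooflabell{rCall}{1} is again \emph{(i)}; \prooflabell{rCall}{2} produces $\alpha = \fun{f}{n}$, hence $\after{S, \alpha} = S$ and $\after{\Delta, \alpha, S} = \Delta, \fun{f}{n} : S$, and $t'$ is the body of $f$ with $\dualpid$ instantiated by $\pidvalue$ and $\widetilde{x}$ by $\widetilde{v}$. This $t'$ is typed by re-using the body typecheck already recorded for $f$: by \prooflabel{tModule} when $\fun{f}{n} \in \dom{\Delta}$ and the call was typed with \prooflabel{tRecKnownCall} (so $S' = \stend$), and by the body premise of \prooflabel{tRecUnknownCall} otherwise (where that premise is already stated under $\Delta, \fun{f}{n} : S$); one then instantiates it using \emph{(ii)} — the argument types match by the expression-typing premises of the call together with \emph{(i)}, and $\pidvalue$ has type $\texttype{pid}$ — and, in the \prooflabel{tRecKnownCall} subcase, weakens $\Delta$ to $\Delta, \fun{f}{n} : S$ via \emph{(iv)}. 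I expect this function-call case to be the main obstacle: it is where the static bookkeeping of $\Delta$ and the ``$S$ fully consumed, residual $\stend$'' convention of \prooflabel{tRecKnownCall} must be reconciled with the dynamic unfolding of (possibly recursive) calls, and it is what forces lemma \emph{(iv)} and a well-formedness invariant relating the running $\Delta$ to $\sessions{\widetilde{D}}$.
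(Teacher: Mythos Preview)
Your proposal is correct and follows the same approach the paper declares: induction on the typing derivation (the paper's proof in the body is literally that one line), supported by exactly the auxiliary lemmas one expects --- expression-level subject reduction, substitution, a pattern-matching/typing correspondence, and weakening of $\Delta$. Your flagged concern about \prooflabel{rBranch} (that the incoming payload $\widetilde{v}$ must be assumed to inhabit $\widetilde{T_j}$ for the substitution step to go through, since neither \prooflabel{rBranch} nor the \aftername function enforces this) and your anticipation that the function-call case is where the $\Delta$-bookkeeping and the residual-$\stend$ convention need careful reconciliation are both accurate observations left implicit by the paper's one-line sketch.
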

\begin{proof}
  By induction on the typing derivation $\typingsigma{\Delta}{\emptyset}{\Sigma}{\identifier}{S}{t}{T}{S'}$.
  \iftoggle{shortversion}{%
  }{%
    Refer to \Cref{sec:appendix-proofs-thm}.
  }
\end{proof}

\begin{figure}
  \begin{center}
    \begin{tikzpicture}
        \node[] (t) {$t_1$};
        \node[below=5mm of t] (t-S) {$\highlight{yellow}{\small$S_1$}$};
        \node[right=21mm of t] (tprime) {$t_2$};
        \node[below=5mm of tprime] (tprime-S) {\small$\after{S_1, \alpha_1} = S_2$};
        \node[right=21mm of tprime] (tdots) {$\dots$};
        \node[right=21mm of tdots] (tv) {$v$};
        \node[below=5mm of tv] (tv-S) {\small$\after{S_n, \alpha_n} = \highlight{yellow}{\small$\stend$}$};
        \node[right=10mm of tv] (end) {};
    
        \draw[->](t.east) -- (tprime.west) node[above,text centered,midway,font=\footnotesize] {$\alpha_1$};
        \draw[->](tprime.east) -- (tdots.west) node[above,text centered,midway,font=\footnotesize] {$\alpha_2$};
        \draw[->](tdots.east) -- (tv.west) node[above,text centered,midway,font=\footnotesize] {$\alpha_n$};
        \draw[->, negated](tv.east) -- (end.west);
        
        \draw[densely dotted](t) -- (t-S);
        \draw[densely dotted](tprime) -- (tprime-S);
        \draw[densely dotted](tv) -- (tv-S);
    \end{tikzpicture}
    \end{center}
    \caption{Repeated applications of \emph{session fidelity}}
    \label{fig:session-fidelity-in-use}
\end{figure}
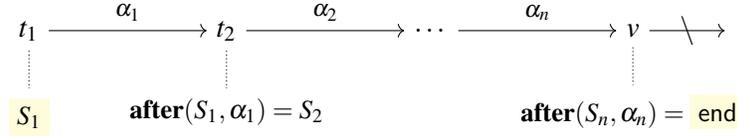

As shown in \Cref{fig:session-fidelity-in-use}, by repeatedly applying  \Cref{thm:session-fidelity}, we can therefore conclude that all the (external) actions generated as a result of a computation (\ie sequence of transition steps) must all be actions that follow the protocol described by the initial session type. 
Since public functions are always typed with a residual session type $\stend$, certain executions could reach the case where the outstanding session is updated to  $\stend$ as well, \ie  $\after{S_n, \alpha_n} = \stend$.
In such a case, we are guaranteed that the term will not produce further side-effects, as in the case of \Cref{fig:session-fidelity-in-use} where the term is reduced all the way down to some value, $v$.

\begin{example}
We consider a concrete example to show the importance of session fidelity.
The function called $\fun{\texttt{pinger}}{1}$  is able to send \textlabel{ping} and receive \textlabel{pong} repeatedly.

\begin{minipage}{0.49\linewidth}
\begin{lstlisting}[language=elixir] 
@session "X = !ping().?pong().X"
def pinger(pid) do
  x = send(pid, {:ping})

  receive do
    {:pong} -> IO.puts("Received pong.")
  end
  pinger(pid)
end
\end{lstlisting}
\end{minipage}

This function adheres to the following protocol:
\begin{equation*}
    \hspace*{-2mm}
    \stsessionrecvar{X} = \; !\textlabel{ping}().?\textlabel{pong}().\stsessionrecvar{X}
\end{equation*}

A process evaluating the function \texttt{pinger} executes by first sending a message containing a \textlabel{ping} label to the interacting processes' \pid ($\pidvalue_{pong}$), as shown below.

\vspace{2mm}
\begin{tikzpicture}
\hspace*{-2mm}
\node (first) {
\begin{minipage}{6cm}
\begin{lstlisting}[language=elixir,numbers=none,frame=single] 
x = send((*$\pidvalue_{pong}$*), {:ping})

receive do # ...
\end{lstlisting}
\end{minipage}
};

\node[right=20mm of first] (second) {
\begin{minipage}{6cm}
\begin{lstlisting}[language=elixir,numbers=none,frame=single] 
x = {:ping}

receive do # ...
\end{lstlisting}
\end{minipage}
};

\draw [decorate,very thick,decoration = {brace,amplitude=2mm}] ([xshift=7mm]first.north west) --  ([xshift=-2mm]first.north east);
\draw [decorate,very thick,decoration = {brace,amplitude=2mm}] ([xshift=7mm]second.north west) --  ([xshift=-1mm]second.north east);
\draw[dotted] ([xshift=7mm]first.north west) -- ([xshift=7mm,yshift=3mm]first.south west) -- ([xshift=-2mm,yshift=3mm]first.south east) --  ([xshift=-2mm]first.north east);
\draw[dotted] ([xshift=7mm]second.north west) -- ([xshift=7mm,yshift=3mm]second.south west) -- ([xshift=-1mm,yshift=3mm]second.south east) --  ([xshift=-1mm]second.north east);

\node (t) [above=3mm of first.north,xshift=2.5mm]{$t$};
\node (tprime) [above=3mm of second.north,xshift=3.5mm]{$t'$};
\draw[->,thick](t.east) -- (tprime.west) node[above,text centered,midway,font=\footnotesize] {$\alpha = \pidvalue_{auct} !\left\{\texttt{ping}\right\}$};
\end{tikzpicture}

\noindent
As the process evaluates, the initial term $t$ transitions to $t'$,  where it sends a message as a side-effect.
This side-effect is denoted as an action $\alpha$, where $\alpha = \pidvalue_{pong} !\left\{\texttt{ping}\right\}$. 
By the \fullref{def:after}, $\stsessionrecvar{X}$ evolves to a new session type $\stsessionrecvar{X'}$:
\begin{equation*}
    \stsessionrecvar{X'} = \after{!\textlabel{ping}().?\textlabel{pong}().\stsessionrecvar{X}, \alpha} = \; 
    ?\textlabel{pong}().\stsessionrecvar{X}
\end{equation*} 

For $t'$ to remain well-typed, it must now match with the evolved session type $\stsessionrecvar{X'}$, where it has to be able to receive a message labelled \textlabel{pong}, before recursing.
Although the process keeps executing indefinitely, by the session fidelity property, we know that each step of execution will be in line with the original protocol.
\qedhere
\end{example}



\section{Related Work}
\label{sec:related-work}

In this section, we compare \ElixirST with other type systems and implementations.

\paragraph*{Type Systems for \Elixir}

Cassola~\etal~\cite{cassola2022101077,Cassola2020} presented a gradual type system for \Elixir.
It statically typechecks the functional part of \Elixir modules, using a gradual approach, where some terms may be left with an unknown expression type.
%
%
In contrast to \ElixirST, Cassola~\etal analyse directly the unexpanded \Elixir code which results in more explicit typechecking rules.
Also, they focus on the static type system without formulating the operational semantics.

Another static type-checker for \Elixir is \textit{Gradient}~\cite{github/gradient}.
It is a wrapper for its \Erlang counterpart tool and takes a similar approach to~\cite{cassola2022101077}, where gradual types are used. 
Another project, \textit{TypeCheck}~\cite{github/Qqwy/typecheck}, adds dynamic type validations to \Elixir programs.
\textit{TypeCheck} performs runtime typechecking by wrapping checks around existing functions.
\textit{Gradient} and \textit{TypeCheck} are provided as an implementation only, without any formal analysis.
In contrast to \ElixirST, the discussed type-checkers~\cite{cassola2022101077,github/gradient,github/Qqwy/typecheck} analyse the sequential part of the \Elixir language omitting any checks related to message-passing between processes.

Some implementations aim to check issues related to message-passing.
Harrison~\cite{DBLP:conf/erlang/Harrison18} statically checks Core \Erlang for such issues.
For instance, it detects orphan messages (\ie messages that will never be received) and unreachable receive branches.
Harrison~\cite{DBLP:conf/erlang/Harrison19} extends~\cite{DBLP:conf/erlang/Harrison18} to add analyse Erlang/OTP behaviours (\eg, \texttt{gen\_server}, which structures processes in a hierarchical manner) by injecting runtime checks in the code. 
Compared to our work, \cite{DBLP:conf/erlang/Harrison18,DBLP:conf/erlang/Harrison19} perform automatic analysis of the implementation, however they do not verify communication with respect to a general protocol (\eg, session types).

Another type system for \Erlang was presented Svensson~\etal~\cite{DBLP:conf/erlang/SvenssonFE10}.
Their body of work covers a larger subset of Erlang to what would be its equivalent in Elixir covered by our work. 
Moreover, its multi-tiered semantics captures an LTS defined over systems of concurrent actors.
Although we opted for a smaller subset, we go beyond the pattern matching described by Svensson~\etal since we perform a degree of typechecking for base types (\eg in the premise of \prooflabel{tBranch}).



\paragraph*{Session Type Systems.}
Closest to our work is~\cite{DBLP:conf/coordination/MostrousV11}, where Mostrous and Vasconcelos introduced session types to a fragment of Core \Erlang, a dynamically typed language linked to \Elixir.
Their type system tags each message exchanged with a unique reference.
This allows multiple sessions to coexist, since different messages could be matched to the corresponding session, using correlation sets.
Mostrous and Vasconcelos take a more theoretic approach, so there is no implementation for~\cite{DBLP:conf/coordination/MostrousV11}.
Their type system guarantees \emph{session fidelity} by inspecting the processes' mailboxes where, at termination, no messages should be left unprocessed in their mailboxes. 
Our work takes a more limited but pragmatic approach, where we introduce session types for functions within a module.
Furthermore, we offer additional features, including variable binding (\eg, in let statements),  expressions (\eg, addition operation), inductive types (\eg, tuples and lists), infinite computation via recursion and explicit protocol definition.

A session-based runtime monitoring tool for python was initially presented by Neykova and \linebreak Yoshida~\cite{DBLP:journals/corr/NeykovaY16,DBLP:journals/corr/NeykovaY14}.
They use the Scribble \cite{DBLP:conf/icdcit/HondaMBCY11} language to write \emph{multiparty} session type (MPST)~\cite{DBLP:journals/jacm/HondaYC16} protocols, which are then used to monitor the processes' actions.
Different processes are ascribed a role (defined in the MPST protocol) using function decorators (akin to our function annotations).
%
Similar to~\cite{DBLP:journals/corr/NeykovaY16,DBLP:journals/corr/NeykovaY14}, Fowler~\cite{DBLP:journals/corr/Fowler16} presented an MPST implementation for \Erlang.
This implementation uses Erlang/OTP behaviours (\eg, \texttt{gen\_server}), which take into account Erlang's \emph{let it crash} philosophy, where processes may fail while executing.
In contrast, although our work accepts a more limited language, \ElixirST provides static guarantees where issues are flagged at pre-deployment stages, rather than flagging them at runtime.
%

Scalas and Yoshida~\cite{DBLP:conf/ecoop/ScalasY16} applied binary session types to the Scala language, where session types are abstracted as Scala classes.
Session fidelity is ensured using Scala's compiler, which complains if an implementation does not follow its ascribed protocol.
Linearity checks are performed at runtime, which ensure that an implementation fully exhausts its protocol exactly once.
%
%
%
Bartolo Burl{\`o}~\etal~\cite{DBLP:conf/ecoop/BurloFS21} extended the aforementioned work~\cite{DBLP:conf/ecoop/ScalasY16}, to monitor one side of an interaction statically and the other side dynamically using runtime monitors.

Harvey~\etal~\cite{DBLP:conf/ecoop/00020DG21} presented a new actor-based language, called EnsembleS, which offers session types as a native feature of the language.
EnsembleS statically verifies implementations with respect to session types, while still allowing for adaptation of \emph{new} actors at runtime, given that the actors obey a known protocol.
Thus, actors can be terminated and discovered at runtime, while still maintaining static correctness.

There have been several binary~\cite{DBLP:conf/icfp/JespersenML15,DBLP:journals/corr/abs-1909-05970} and multiparty~\cite{DBLP:conf/coordination/LagaillardieNY20,DBLP:conf/coordination/CutnerY21} session type implementations for Rust.
These implementations exploit Rust's affine type system to guarantee that channels mirror the actions prescribed by a session type.
%
Padovani~\cite{DBLP:journals/jfp/Padovani17} created a binary session type library for OCaml to provide static communication guarantees. 
This project was extended~\cite{DBLP:journals/pacmpl/MelgrattiP17} to include dynamic contract monitoring which flags violations at runtime.
%
The approaches used in the Rust and OCaml implementations rely heavily on type-level features of the language, which do not readily translate to the dynamically typed \Elixir language.



\section{Conclusion}
\label{sec:conclusion}


In this work we established a correspondence between the \ElixirST type system~\cite{DBLP:conf/agere/TaboneF21} and the runtime behaviour of a client handler running an \Elixir module function that has been typechecked \wrt its session type protocol.
In particular, we showed that this session-based type system observes the standard \emph{session fidelity} property, meaning that processes executing a typed function \emph{always} follow their ascribed protocols at runtime.
This property provides the necessary underlying guarantees  to attain various forms of communication safety, whereby should two processes following mutually compatible protocols (\eg $S$ and its dual $\bar{S}$), they avoid certain communication errors (\eg, a send statement without a corresponding receive construct).
An extended version of this work can be found in the technical report~\cite{TaboneFrancalanzaTechReport}.

\paragraph*{Future work.}
There are a number of avenues we intend to pursue.
One line of investigation is the augmentation of protocols that talk about multiple entry points to a module perhaps from the point of view of a client that is engaged in multiple sessions at one time, possibly involving multiple modules. 
The obvious starting points to look at here are the well-established notions of multiparty session types~\cite{DBLP:journals/jacm/HondaYC16,DBLP:conf/pldi/ScalasYB19} or the body of work on intuitionistic session types organising 
processes hierarchically~\cite{DBLP:journals/pacmpl/BalzerP17,pruiksmaPfenning2022:JFP}.   
Another natural extension to our work would be to augment our session type protocol in such a way to account for process failure and supervisors, which is a core part of the \Elixir programming model. 
For this, we will look at previous work on session type extensions that account for failure~\cite{DBLP:conf/ecoop/00020DG21}.
Finally, we also plan to augment our session typed protocols to account for resource usage and cost, along the lines of \citep{DBLP:conf/lics/Das0P18,DBLP:journals/corr/FrancalanzaVH14}.

\paragraph*{Acknowledgements.}
We thank Simon Fowler and Matthew Alan Le Brun for their encouragement and suggestions for improvement on earlier versions of the work.
This work has been supported by the  MoVeMnt project (No:\,217987)  of  the Icelandic Research Fund and the BehAPI project funded by the EU H2020 RISE of the Marie Sk{\l}odowska-Curie action (No:\,778233).

\bibliographystyle{eptcs}
\bibliography{refs}

\begin{thebibliography}{10}
\providecommand{\bibitemdeclare}[2]{}
\providecommand{\surnamestart}{}
\providecommand{\surnameend}{}
\providecommand{\urlprefix}{Available at }
\providecommand{\url}[1]{\texttt{#1}}
\providecommand{\href}[2]{\texttt{#2}}
\providecommand{\urlalt}[2]{\href{#1}{#2}}
\providecommand{\doi}[1]{doi:\urlalt{https://doi.org/#1}{#1}}
\providecommand{\eprint}[1]{arXiv:\urlalt{https://arxiv.org/abs/#1}{#1}}
\providecommand{\bibinfo}[2]{#2}

\bibitemdeclare{book}{DBLP:books/daglib/Agha90}
\bibitem{DBLP:books/daglib/Agha90}
\bibinfo{author}{Gul~A. \surnamestart Agha\surnameend} (\bibinfo{year}{1990}):
  \emph{\bibinfo{title}{{ACTORS} - a model of concurrent computation in
  distributed systems}}.
\newblock \bibinfo{series}{{MIT} Press series in artificial intelligence},
  \bibinfo{publisher}{{MIT} Press}.

\bibitemdeclare{article}{DBLP:journals/pacmpl/BalzerP17}
\bibitem{DBLP:journals/pacmpl/BalzerP17}
\bibinfo{author}{Stephanie \surnamestart Balzer\surnameend} \&
  \bibinfo{author}{Frank \surnamestart Pfenning\surnameend}
  (\bibinfo{year}{2017}): \emph{\bibinfo{title}{Manifest sharing with session
  types}}.
\newblock {\slshape \bibinfo{journal}{Proc. {ACM} Program. Lang.}}
  \bibinfo{volume}{1}(\bibinfo{number}{{ICFP}}), pp.
  \bibinfo{pages}{37:1--37:29}, \doi{10.1145/3110281}.

\bibitemdeclare{inproceedings}{DBLP:conf/ecoop/BurloFS21}
\bibitem{DBLP:conf/ecoop/BurloFS21}
\bibinfo{author}{Christian \surnamestart {Bartolo Burl{\`{o}}}\surnameend},
  \bibinfo{author}{Adrian \surnamestart Francalanza\surnameend} \&
  \bibinfo{author}{Alceste \surnamestart Scalas\surnameend}
  (\bibinfo{year}{2021}): \emph{\bibinfo{title}{On the Monitorability of
  Session Types, in Theory and Practice}}.
\newblock In \bibinfo{editor}{Anders \surnamestart M{\o}ller\surnameend} \&
  \bibinfo{editor}{Manu \surnamestart Sridharan\surnameend}, editors: {\slshape
  \bibinfo{booktitle}{35th European Conference on Object-Oriented Programming,
  {ECOOP} 2021, July 11-17, 2021, Aarhus, Denmark (Virtual Conference)}},
  {\slshape \bibinfo{series}{LIPIcs}} \bibinfo{volume}{194},
  \bibinfo{publisher}{Schloss Dagstuhl - Leibniz-Zentrum f{\"{u}}r Informatik},
  pp. \bibinfo{pages}{20:1--20:30}, \doi{10.4230/LIPIcs.ECOOP.2021.20}.

\bibitemdeclare{article}{Cassola2020}
\bibitem{Cassola2020}
\bibinfo{author}{Mauricio \surnamestart Cassola\surnameend},
  \bibinfo{author}{Agustín \surnamestart Talagorria\surnameend},
  \bibinfo{author}{Alberto \surnamestart Pardo\surnameend} \&
  \bibinfo{author}{Marcos \surnamestart Viera\surnameend}
  (\bibinfo{year}{2020}): \emph{\bibinfo{title}{A Gradual Type System for
  Elixir}}.
\newblock {\slshape \bibinfo{journal}{Proceedings of the 24th Brazilian
  Symposium on Context-Oriented Programming and Advanced Modularity}},
  \doi{10.1145/3427081.3427084}.

\bibitemdeclare{article}{cassola2022101077}
\bibitem{cassola2022101077}
\bibinfo{author}{Mauricio \surnamestart Cassola\surnameend},
  \bibinfo{author}{Agustín \surnamestart Talagorria\surnameend},
  \bibinfo{author}{Alberto \surnamestart Pardo\surnameend} \&
  \bibinfo{author}{Marcos \surnamestart Viera\surnameend}
  (\bibinfo{year}{2022}): \emph{\bibinfo{title}{A gradual type system for
  {E}lixir}}.
\newblock {\slshape \bibinfo{journal}{Journal of Computer Languages}}
  \bibinfo{volume}{68}, p. \bibinfo{pages}{101077},
  \doi{10.1016/j.cola.2021.101077}.

\bibitemdeclare{inproceedings}{DBLP:conf/coordination/CutnerY21}
\bibitem{DBLP:conf/coordination/CutnerY21}
\bibinfo{author}{Zak \surnamestart Cutner\surnameend} \&
  \bibinfo{author}{Nobuko \surnamestart Yoshida\surnameend}
  (\bibinfo{year}{2021}): \emph{\bibinfo{title}{Safe Session-Based Asynchronous
  Coordination in Rust}}.
\newblock In \bibinfo{editor}{Ferruccio \surnamestart Damiani\surnameend} \&
  \bibinfo{editor}{Ornela \surnamestart Dardha\surnameend}, editors: {\slshape
  \bibinfo{booktitle}{Coordination Models and Languages - 23rd {IFIP} {WG} 6.1
  International Conference, {COORDINATION} 2021, Held as Part of the 16th
  International Federated Conference on Distributed Computing Techniques,
  DisCoTec 2021, Valletta, Malta, June 14-18, 2021, Proceedings}}, {\slshape
  \bibinfo{series}{Lecture Notes in Computer Science}} \bibinfo{volume}{12717},
  \bibinfo{publisher}{Springer}, pp. \bibinfo{pages}{80--89},
  \doi{10.1007/978-3-030-78142-2_5}.

\bibitemdeclare{inproceedings}{DBLP:conf/lics/Das0P18}
\bibitem{DBLP:conf/lics/Das0P18}
\bibinfo{author}{Ankush \surnamestart Das\surnameend}, \bibinfo{author}{Jan
  \surnamestart Hoffmann\surnameend} \& \bibinfo{author}{Frank \surnamestart
  Pfenning\surnameend} (\bibinfo{year}{2018}): \emph{\bibinfo{title}{Work
  Analysis with Resource-Aware Session Types}}.
\newblock In \bibinfo{editor}{Anuj \surnamestart Dawar\surnameend} \&
  \bibinfo{editor}{Erich \surnamestart Gr{\"{a}}del\surnameend}, editors:
  {\slshape \bibinfo{booktitle}{Proceedings of the 33rd Annual {ACM/IEEE}
  Symposium on Logic in Computer Science, {LICS} 2018, Oxford, UK, July 09-12,
  2018}}, \bibinfo{publisher}{{ACM}}, pp. \bibinfo{pages}{305--314},
  \doi{10.1145/3209108.3209146}.

\bibitemdeclare{misc}{github/gradient}
\bibitem{github/gradient}
\bibinfo{author}{\surnamestart {Erlang Solutions}\surnameend}:
  \emph{\bibinfo{title}{Gradient}}.
\newblock \urlprefix\url{https://github.com/esl/gradient}.

\bibitemdeclare{inproceedings}{DBLP:journals/corr/Fowler16}
\bibitem{DBLP:journals/corr/Fowler16}
\bibinfo{author}{Simon \surnamestart Fowler\surnameend} (\bibinfo{year}{2016}):
  \emph{\bibinfo{title}{An {E}rlang Implementation of Multiparty Session
  Actors}}.
\newblock In \bibinfo{editor}{Massimo \surnamestart Bartoletti\surnameend},
  \bibinfo{editor}{Ludovic \surnamestart Henrio\surnameend},
  \bibinfo{editor}{Sophia \surnamestart Knight\surnameend} \&
  \bibinfo{editor}{Hugo~Torres \surnamestart Vieira\surnameend}, editors:
  {\slshape \bibinfo{booktitle}{Proceedings 9th Interaction and Concurrency
  Experience, {ICE} 2016, Heraklion, Greece, 8-9 June 2016}}, {\slshape
  \bibinfo{series}{{EPTCS}}} \bibinfo{volume}{223}, pp.
  \bibinfo{pages}{36--50}, \doi{10.4204/EPTCS.223.3}.

\bibitemdeclare{article}{DBLP:journals/corr/FrancalanzaVH14}
\bibitem{DBLP:journals/corr/FrancalanzaVH14}
\bibinfo{author}{Adrian \surnamestart Francalanza\surnameend},
  \bibinfo{author}{Edsko \surnamestart de~Vries\surnameend} \&
  \bibinfo{author}{Matthew \surnamestart Hennessy\surnameend}
  (\bibinfo{year}{2014}): \emph{\bibinfo{title}{Compositional Reasoning for
  Explicit Resource Management in Channel-Based Concurrency}}.
\newblock {\slshape \bibinfo{journal}{Log. Methods Comput. Sci.}}
  \bibinfo{volume}{10}(\bibinfo{number}{2}), \doi{10.2168/LMCS-10(2:15)2014}.

\bibitemdeclare{inproceedings}{DBLP:conf/erlang/Harrison18}
\bibitem{DBLP:conf/erlang/Harrison18}
\bibinfo{author}{Joseph \surnamestart Harrison\surnameend}
  (\bibinfo{year}{2018}): \emph{\bibinfo{title}{Automatic detection of Core
  Erlang message passing errors}}.
\newblock In \bibinfo{editor}{Natalia \surnamestart Chechina\surnameend} \&
  \bibinfo{editor}{Adrian \surnamestart Francalanza\surnameend}, editors:
  {\slshape \bibinfo{booktitle}{Proceedings of the 17th {ACM} {SIGPLAN}
  International Workshop on Erlang, {ICFP} 2018, St. Louis, MO, USA, September
  23-29, 2018}}, \bibinfo{publisher}{{ACM}}, pp. \bibinfo{pages}{37--48},
  \doi{10.1145/3239332.3242765}.

\bibitemdeclare{inproceedings}{DBLP:conf/erlang/Harrison19}
\bibitem{DBLP:conf/erlang/Harrison19}
\bibinfo{author}{Joseph \surnamestart Harrison\surnameend}
  (\bibinfo{year}{2019}): \emph{\bibinfo{title}{Runtime Type Safety for
  {Erlang/OTP} Behaviours}}.
\newblock In \bibinfo{editor}{Adrian \surnamestart Francalanza\surnameend} \&
  \bibinfo{editor}{Vikt{\'{o}}ria \surnamestart F{\"{o}}rd{\'{o}}s\surnameend},
  editors: {\slshape \bibinfo{booktitle}{Proceedings of the 18th {ACM}
  {SIGPLAN} International Workshop on Erlang, Erlang{@}ICFP 2019, Berlin,
  Germany, August 18, 2019}}, \bibinfo{publisher}{{ACM}}, pp.
  \bibinfo{pages}{36--47}, \doi{10.1145/3331542.3342571}.

\bibitemdeclare{inproceedings}{DBLP:conf/ecoop/00020DG21}
\bibitem{DBLP:conf/ecoop/00020DG21}
\bibinfo{author}{Paul \surnamestart Harvey\surnameend}, \bibinfo{author}{Simon
  \surnamestart Fowler\surnameend}, \bibinfo{author}{Ornela \surnamestart
  Dardha\surnameend} \& \bibinfo{author}{Simon~J. \surnamestart Gay\surnameend}
  (\bibinfo{year}{2021}): \emph{\bibinfo{title}{Multiparty Session Types for
  Safe Runtime Adaptation in an Actor Language}}.
\newblock In \bibinfo{editor}{Anders \surnamestart M{\o}ller\surnameend} \&
  \bibinfo{editor}{Manu \surnamestart Sridharan\surnameend}, editors: {\slshape
  \bibinfo{booktitle}{35th European Conference on Object-Oriented Programming,
  {ECOOP} 2021, July 11-17, 2021, Aarhus, Denmark (Virtual Conference)}},
  {\slshape \bibinfo{series}{LIPIcs}} \bibinfo{volume}{194},
  \bibinfo{publisher}{Schloss Dagstuhl - Leibniz-Zentrum f{\"{u}}r Informatik},
  pp. \bibinfo{pages}{10:1--10:30}, \doi{10.4230/LIPIcs.ECOOP.2021.10}.

\bibitemdeclare{inproceedings}{DBLP:conf/ijcai/HewittBS73}
\bibitem{DBLP:conf/ijcai/HewittBS73}
\bibinfo{author}{Carl \surnamestart Hewitt\surnameend},
  \bibinfo{author}{Peter~Boehler \surnamestart Bishop\surnameend} \&
  \bibinfo{author}{Richard \surnamestart Steiger\surnameend}
  (\bibinfo{year}{1973}): \emph{\bibinfo{title}{A Universal Modular {ACTOR}
  Formalism for Artificial Intelligence}}.
\newblock In \bibinfo{editor}{Nils~J. \surnamestart Nilsson\surnameend},
  editor: {\slshape \bibinfo{booktitle}{Proceedings of the 3rd International
  Joint Conference on Artificial Intelligence. Standford, CA, USA, August
  20-23, 1973}}, \bibinfo{publisher}{William Kaufmann}, pp.
  \bibinfo{pages}{235--245}.

\bibitemdeclare{inproceedings}{DBLP:conf/icdcit/HondaMBCY11}
\bibitem{DBLP:conf/icdcit/HondaMBCY11}
\bibinfo{author}{Kohei \surnamestart Honda\surnameend}, \bibinfo{author}{Aybek
  \surnamestart Mukhamedov\surnameend}, \bibinfo{author}{Gary \surnamestart
  Brown\surnameend}, \bibinfo{author}{Tzu{-}Chun \surnamestart Chen\surnameend}
  \& \bibinfo{author}{Nobuko \surnamestart Yoshida\surnameend}
  (\bibinfo{year}{2011}): \emph{\bibinfo{title}{{Scribbling Interactions with a
  Formal Foundation}}}.
\newblock In \bibinfo{editor}{Raja \surnamestart Natarajan\surnameend} \&
  \bibinfo{editor}{Adegboyega~K. \surnamestart Ojo\surnameend}, editors:
  {\slshape \bibinfo{booktitle}{Distributed Computing and Internet Technology -
  7th International Conference, {ICDCIT} 2011, Bhubaneshwar, India, February
  9-12, 2011. Proceedings}}, {\slshape \bibinfo{series}{Lecture Notes in
  Computer Science}} \bibinfo{volume}{6536}, \bibinfo{publisher}{Springer}, pp.
  \bibinfo{pages}{55--75}, \doi{10.1007/978-3-642-19056-8_4}.

\bibitemdeclare{article}{DBLP:journals/jacm/HondaYC16}
\bibitem{DBLP:journals/jacm/HondaYC16}
\bibinfo{author}{Kohei \surnamestart Honda\surnameend}, \bibinfo{author}{Nobuko
  \surnamestart Yoshida\surnameend} \& \bibinfo{author}{Marco \surnamestart
  Carbone\surnameend} (\bibinfo{year}{2016}): \emph{\bibinfo{title}{Multiparty
  Asynchronous Session Types}}.
\newblock {\slshape \bibinfo{journal}{J. {ACM}}}
  \bibinfo{volume}{63}(\bibinfo{number}{1}), pp. \bibinfo{pages}{9:1--9:67},
  \doi{10.1145/2827695}.

\bibitemdeclare{inproceedings}{DBLP:conf/icfp/JespersenML15}
\bibitem{DBLP:conf/icfp/JespersenML15}
\bibinfo{author}{Thomas Bracht~Laumann \surnamestart Jespersen\surnameend},
  \bibinfo{author}{Philip \surnamestart Munksgaard\surnameend} \&
  \bibinfo{author}{Ken~Friis \surnamestart Larsen\surnameend}
  (\bibinfo{year}{2015}): \emph{\bibinfo{title}{Session types for Rust}}.
\newblock In \bibinfo{editor}{Patrick \surnamestart Bahr\surnameend} \&
  \bibinfo{editor}{Sebastian \surnamestart Erdweg\surnameend}, editors:
  {\slshape \bibinfo{booktitle}{Proceedings of the 11th {ACM} {SIGPLAN}
  Workshop on Generic Programming, WGP{@}ICFP 2015, Vancouver, BC, Canada,
  August 30, 2015}}, \bibinfo{publisher}{{ACM}}, pp. \bibinfo{pages}{13--22},
  \doi{10.1145/2808098.2808100}.

\bibitemdeclare{article}{DBLP:journals/cacm/Keller76}
\bibitem{DBLP:journals/cacm/Keller76}
\bibinfo{author}{Robert~M. \surnamestart Keller\surnameend}
  (\bibinfo{year}{1976}): \emph{\bibinfo{title}{Formal Verification of Parallel
  Programs}}.
\newblock {\slshape \bibinfo{journal}{Commun. {ACM}}}
  \bibinfo{volume}{19}(\bibinfo{number}{7}), pp. \bibinfo{pages}{371--384},
  \doi{10.1145/360248.360251}.

\bibitemdeclare{inproceedings}{DBLP:journals/corr/abs-1909-05970}
\bibitem{DBLP:journals/corr/abs-1909-05970}
\bibinfo{author}{Wen \surnamestart Kokke\surnameend} (\bibinfo{year}{2019}):
  \emph{\bibinfo{title}{Rusty Variation: Deadlock-free Sessions with Failure in
  Rust}}.
\newblock In \bibinfo{editor}{Massimo \surnamestart Bartoletti\surnameend},
  \bibinfo{editor}{Ludovic \surnamestart Henrio\surnameend},
  \bibinfo{editor}{Anastasia \surnamestart Mavridou\surnameend} \&
  \bibinfo{editor}{Alceste \surnamestart Scalas\surnameend}, editors: {\slshape
  \bibinfo{booktitle}{Proceedings 12th Interaction and Concurrency Experience,
  {ICE} 2019, Copenhagen, Denmark, 20-21 June 2019}}, {\slshape
  \bibinfo{series}{{EPTCS}}} \bibinfo{volume}{304}, pp.
  \bibinfo{pages}{48--60}, \doi{10.4204/EPTCS.304.4}.

\bibitemdeclare{inproceedings}{DBLP:conf/coordination/LagaillardieNY20}
\bibitem{DBLP:conf/coordination/LagaillardieNY20}
\bibinfo{author}{Nicolas \surnamestart Lagaillardie\surnameend},
  \bibinfo{author}{Rumyana \surnamestart Neykova\surnameend} \&
  \bibinfo{author}{Nobuko \surnamestart Yoshida\surnameend}
  (\bibinfo{year}{2020}): \emph{\bibinfo{title}{Implementing Multiparty Session
  Types in Rust}}.
\newblock In \bibinfo{editor}{Simon \surnamestart Bliudze\surnameend} \&
  \bibinfo{editor}{Laura \surnamestart Bocchi\surnameend}, editors: {\slshape
  \bibinfo{booktitle}{Coordination Models and Languages - 22nd {IFIP} {WG} 6.1
  International Conference, {COORDINATION} 2020, Held as Part of the 15th
  International Federated Conference on Distributed Computing Techniques,
  DisCoTec 2020, Valletta, Malta, June 15-19, 2020, Proceedings}}, {\slshape
  \bibinfo{series}{Lecture Notes in Computer Science}} \bibinfo{volume}{12134},
  \bibinfo{publisher}{Springer}, pp. \bibinfo{pages}{127--136},
  \doi{10.1007/978-3-030-50029-0_8}.

\bibitemdeclare{inproceedings}{DBLP:conf/ppdp/LindahlS06}
\bibitem{DBLP:conf/ppdp/LindahlS06}
\bibinfo{author}{Tobias \surnamestart Lindahl\surnameend} \&
  \bibinfo{author}{Konstantinos \surnamestart Sagonas\surnameend}
  (\bibinfo{year}{2006}): \emph{\bibinfo{title}{Practical type inference based
  on success typings}}.
\newblock In \bibinfo{editor}{Annalisa \surnamestart Bossi\surnameend} \&
  \bibinfo{editor}{Michael~J. \surnamestart Maher\surnameend}, editors:
  {\slshape \bibinfo{booktitle}{Proceedings of the 8th International {ACM}
  {SIGPLAN} Conference on Principles and Practice of Declarative Programming,
  July 10-12, 2006, Venice, Italy}}, \bibinfo{publisher}{{ACM}}, pp.
  \bibinfo{pages}{167--178}, \doi{10.1145/1140335.1140356}.

\bibitemdeclare{article}{DBLP:journals/pacmpl/MelgrattiP17}
\bibitem{DBLP:journals/pacmpl/MelgrattiP17}
\bibinfo{author}{Hern{\'{a}}n~C. \surnamestart Melgratti\surnameend} \&
  \bibinfo{author}{Luca \surnamestart Padovani\surnameend}
  (\bibinfo{year}{2017}): \emph{\bibinfo{title}{Chaperone Contracts for
  Higher-Order Sessions}}.
\newblock {\slshape \bibinfo{journal}{Proc. {ACM} Program. Lang.}}
  \bibinfo{volume}{1}(\bibinfo{number}{{ICFP}}), pp.
  \bibinfo{pages}{35:1--35:29}, \doi{10.1145/3110279}.

\bibitemdeclare{inproceedings}{DBLP:conf/coordination/MostrousV11}
\bibitem{DBLP:conf/coordination/MostrousV11}
\bibinfo{author}{Dimitris \surnamestart Mostrous\surnameend} \&
  \bibinfo{author}{Vasco~Thudichum \surnamestart Vasconcelos\surnameend}
  (\bibinfo{year}{2011}): \emph{\bibinfo{title}{Session Typing for a
  Featherweight Erlang}}.
\newblock In \bibinfo{editor}{Wolfgang~De \surnamestart Meuter\surnameend} \&
  \bibinfo{editor}{Gruia{-}Catalin \surnamestart Roman\surnameend}, editors:
  {\slshape \bibinfo{booktitle}{Coordination Models and Languages - 13th
  International Conference, {COORDINATION} 2011, Reykjavik, Iceland, June 6-9,
  2011. Proceedings}}, {\slshape \bibinfo{series}{Lecture Notes in Computer
  Science}} \bibinfo{volume}{6721}, \bibinfo{publisher}{Springer}, pp.
  \bibinfo{pages}{95--109}, \doi{10.1007/978-3-642-21464-6_7}.

\bibitemdeclare{inproceedings}{DBLP:journals/corr/NeykovaY14}
\bibitem{DBLP:journals/corr/NeykovaY14}
\bibinfo{author}{Rumyana \surnamestart Neykova\surnameend} \&
  \bibinfo{author}{Nobuko \surnamestart Yoshida\surnameend}
  (\bibinfo{year}{2014}): \emph{\bibinfo{title}{Multiparty Session Actors}}.
\newblock In \bibinfo{editor}{Alastair~F. \surnamestart Donaldson\surnameend}
  \& \bibinfo{editor}{Vasco~T. \surnamestart Vasconcelos\surnameend}, editors:
  {\slshape \bibinfo{booktitle}{Proceedings 7th Workshop on Programming
  Language Approaches to Concurrency and Communication-cEntric Software,
  {PLACES} 2014, Grenoble, France, 12 April 2014}}, {\slshape
  \bibinfo{series}{{EPTCS}}} \bibinfo{volume}{155}, pp.
  \bibinfo{pages}{32--37}, \doi{10.4204/EPTCS.155.5}.

\bibitemdeclare{article}{DBLP:journals/corr/NeykovaY16}
\bibitem{DBLP:journals/corr/NeykovaY16}
\bibinfo{author}{Rumyana \surnamestart Neykova\surnameend} \&
  \bibinfo{author}{Nobuko \surnamestart Yoshida\surnameend}
  (\bibinfo{year}{2017}): \emph{\bibinfo{title}{Multiparty Session Actors}}.
\newblock {\slshape \bibinfo{journal}{Log. Methods Comput. Sci.}}
  \bibinfo{volume}{13}(\bibinfo{number}{1}), \doi{10.23638/LMCS-13(1:17)2017}.

\bibitemdeclare{article}{DBLP:journals/jfp/Padovani17}
\bibitem{DBLP:journals/jfp/Padovani17}
\bibinfo{author}{Luca \surnamestart Padovani\surnameend}
  (\bibinfo{year}{2017}): \emph{\bibinfo{title}{A Simple Library Implementation
  of Binary Sessions}}.
\newblock {\slshape \bibinfo{journal}{J. Funct. Program.}}
  \bibinfo{volume}{27}, p.~\bibinfo{pages}{e4},
  \doi{10.1017/S0956796816000289}.

\bibitemdeclare{book}{DBLP:books/daglib/0005958}
\bibitem{DBLP:books/daglib/0005958}
\bibinfo{author}{Benjamin~C. \surnamestart Pierce\surnameend}
  (\bibinfo{year}{2002}): \emph{\bibinfo{title}{Types and {P}rogramming
  {L}anguages}}.
\newblock \bibinfo{publisher}{{MIT} Press}.

\bibitemdeclare{article}{pruiksmaPfenning2022:JFP}
\bibitem{pruiksmaPfenning2022:JFP}
\bibinfo{author}{Klaas \surnamestart Pruiksma\surnameend} \&
  \bibinfo{author}{Frank \surnamestart Pfenning\surnameend}
  (\bibinfo{year}{2022}): \emph{\bibinfo{title}{Back to futures}}.
\newblock {\slshape \bibinfo{journal}{Journal of Functional Programming}}
  \bibinfo{volume}{32}, p.~\bibinfo{pages}{e6},
  \doi{10.1017/S0956796822000016}.

\bibitemdeclare{inproceedings}{DBLP:conf/ecoop/ScalasY16}
\bibitem{DBLP:conf/ecoop/ScalasY16}
\bibinfo{author}{Alceste \surnamestart Scalas\surnameend} \&
  \bibinfo{author}{Nobuko \surnamestart Yoshida\surnameend}
  (\bibinfo{year}{2016}): \emph{\bibinfo{title}{Lightweight Session Programming
  in {S}cala}}.
\newblock In \bibinfo{editor}{Shriram \surnamestart Krishnamurthi\surnameend}
  \& \bibinfo{editor}{Benjamin~S. \surnamestart Lerner\surnameend}, editors:
  {\slshape \bibinfo{booktitle}{30th European Conference on Object-Oriented
  Programming, {ECOOP} 2016, July 18-22, 2016, Rome, Italy}}, {\slshape
  \bibinfo{series}{LIPIcs}}~\bibinfo{volume}{56}, \bibinfo{publisher}{Schloss
  Dagstuhl - Leibniz-Zentrum f{\"{u}}r Informatik}, pp.
  \bibinfo{pages}{21:1--21:28}, \doi{10.4230/LIPIcs.ECOOP.2016.21}.

\bibitemdeclare{inproceedings}{DBLP:conf/pldi/ScalasYB19}
\bibitem{DBLP:conf/pldi/ScalasYB19}
\bibinfo{author}{Alceste \surnamestart Scalas\surnameend},
  \bibinfo{author}{Nobuko \surnamestart Yoshida\surnameend} \&
  \bibinfo{author}{Elias \surnamestart Benussi\surnameend}
  (\bibinfo{year}{2019}): \emph{\bibinfo{title}{Verifying message-passing
  programs with dependent behavioural types}}.
\newblock In \bibinfo{editor}{Kathryn~S. \surnamestart McKinley\surnameend} \&
  \bibinfo{editor}{Kathleen \surnamestart Fisher\surnameend}, editors:
  {\slshape \bibinfo{booktitle}{Proceedings of the 40th {ACM} {SIGPLAN}
  Conference on Programming Language Design and Implementation, {PLDI} 2019,
  Phoenix, AZ, USA, June 22-26, 2019}}, \bibinfo{publisher}{{ACM}}, pp.
  \bibinfo{pages}{502--516}, \doi{10.1145/3314221.3322484}.

\bibitemdeclare{inproceedings}{DBLP:conf/erlang/SvenssonFE10}
\bibitem{DBLP:conf/erlang/SvenssonFE10}
\bibinfo{author}{Hans \surnamestart Svensson\surnameend},
  \bibinfo{author}{Lars{-}{\AA}ke \surnamestart Fredlund\surnameend} \&
  \bibinfo{author}{Clara~Benac \surnamestart Earle\surnameend}
  (\bibinfo{year}{2010}): \emph{\bibinfo{title}{A unified semantics for future
  Erlang}}.
\newblock In \bibinfo{editor}{Scott~Lystig \surnamestart Fritchie\surnameend}
  \& \bibinfo{editor}{Konstantinos \surnamestart Sagonas\surnameend}, editors:
  {\slshape \bibinfo{booktitle}{Proceedings of the 9th {ACM} {SIGPLAN} workshop
  on Erlang, Baltimore, Maryland, USA, September 30, 2010}},
  \bibinfo{publisher}{{ACM}}, pp. \bibinfo{pages}{23--32},
  \doi{10.1145/1863509.1863514}.

\bibitemdeclare{inproceedings}{DBLP:conf/agere/TaboneF21}
\bibitem{DBLP:conf/agere/TaboneF21}
\bibinfo{author}{Gerard \surnamestart Tabone\surnameend} \&
  \bibinfo{author}{Adrian \surnamestart Francalanza\surnameend}
  (\bibinfo{year}{2021}): \emph{\bibinfo{title}{Session {T}ypes in {E}lixir}}.
\newblock In \bibinfo{editor}{Elias \surnamestart Castegren\surnameend},
  \bibinfo{editor}{Joeri~De \surnamestart Koster\surnameend} \&
  \bibinfo{editor}{Simon \surnamestart Fowler\surnameend}, editors: {\slshape
  \bibinfo{booktitle}{Proceedings of the 11th {ACM} {SIGPLAN} International
  Workshop on Programming Based on Actors, Agents, and Decentralized Control,
  {AGERE} 2021, Virtual Event / Chicago, IL, USA, 17 October 2021}},
  \bibinfo{publisher}{{ACM}}, pp. \bibinfo{pages}{12--23},
  \doi{10.1145/3486601.3486708}.

\bibitemdeclare{techreport}{TaboneFrancalanzaTechReport}
\bibitem{TaboneFrancalanzaTechReport}
\bibinfo{author}{Gerard \surnamestart Tabone\surnameend} \&
  \bibinfo{author}{Adrian \surnamestart Francalanza\surnameend}
  (\bibinfo{year}{2022}): \emph{\bibinfo{title}{Static Checking of Concurrent
  Programs in Elixir Using Session Types}}.
\newblock \bibinfo{type}{Technical Report}, \bibinfo{institution}{University of
  Malta}, \bibinfo{address}{Msida, Malta}.
\newblock \urlprefix\url{https://gtabone.page.link/V9Hh}.

\bibitemdeclare{book}{thomas2018programming}
\bibitem{thomas2018programming}
\bibinfo{author}{Dave \surnamestart Thomas\surnameend} (\bibinfo{year}{2018}):
  \emph{\bibinfo{title}{{Programming Elixir: Functional, Concurrent, Pragmatic,
  Fun}}}.
\newblock \bibinfo{publisher}{Pragmatic Bookshelf}.

\bibitemdeclare{misc}{github/Qqwy/typecheck}
\bibitem{github/Qqwy/typecheck}
\bibinfo{author}{Wiebe-Marten \surnamestart Wijnja\surnameend}:
  \emph{\bibinfo{title}{TypeCheck: Fast and flexible runtime type-checking for
  your Elixir projects}}.
\newblock \urlprefix\url{https://github.com/Qqwy/elixir-type_check}.

\end{thebibliography}

\newpage
  
\appendix

\part*{Appendix}

\section{Additional Definitions}
\label{sec:appendix-definitions}
In this appendix, we formalise some auxiliary definitions that were used in \Cref{sec:preliminaries,sec:semantics,sec:session-fidelity}.

\iftoggle{shortversion}{%
}{%
\begin{definition}[Free Variables]
    \label{def:free-variables}
    The set of free variables is defined inductively as:
    \begin{align*}
    \hspace{20mm} \fv{e} &\defsymbol         
    \begin{cases}
        \{x\} & e = x \\
        \emptyset & e = b \\
        \fv{e_1} \cup \fv{e_2} & e = \stexpbinaryop{e_1}{\diamond}{e_2} \text{ or } e = \listtypee{e_1}{e_2} \\
        \fv{e'} & e = \stexpnot{e'} \\
        \cup_{i \in 1..n}\fv{e_i} & e = \tupletype{e_1 \dotsspace e_n}
    \end{cases} \\
    \fv{t} &\defsymbol        
    \begin{cases}
        \fv{t_1} \cup (\fv{t_2} \setminus \{x\}) & t = ( \stexpsequence{x}{t_1}{t_2} ) \\
        \cup_{i \in 1..n}\fv{e_i} \cup \fv{\identifier} & t = \stexpsend{\identifier}{\left\{\atom{l}, e_1 \dotsspace e_n\right\}} \\
        \cup_{i \in I}[\fv{t_i} \setminus \vars{\widetilde{p_i}}] & t = 
        \stexpreceive{\left\{ \atom{l}_i, \widetilde{p_i} \right\} }{t_i}{i \in I} \\
        \cup_{i \in 2..n}\fv{e_i} \cup \fv{\identifier} & t = \stexpfunction{f}{\identifier, e_2 \dotsspace e_n} \\
        \cup_{i \in I}[\fv{t_i} \setminus \vars{p_i}] \cup \fv{e} & t = \stexpcase{e}{p_i}{t_i}{i \in I}  \hspace{22.2mm} \qedhere
    \end{cases}
\end{align*}
\end{definition}


\begin{definition}[Bound Variables]\ \\
    \label{def:bound-variables}
    \begin{equation*}
        \hspace{15mm} \bv{t} \defsymbol         
    \begin{cases}
        \emptyset & t = e \text{ or } t = \stexpsend{\identifier}{\left\{\atom{l}, \widetilde{e} \right\}} \text{ or } t = \stexpfunction{f}{\widetilde{e} } \\
        \{x\} \cup \bv{t_1} \cup \bv{t_2} & t = ( \stexpsequence{x}{t_1}{t_2} ) \\
        \cup_{i \in I}[\bv{t_i} \cup \vars{\widetilde{p_i}}] & t = 
        \stexpreceive{\left\{ \atom{l}_i, \widetilde{p_i} \right\} }{t_i}{i \in I} \\
        \cup_{i \in I}[\bv{t_i} \cup \vars{p_i}] & t = \stexpcase{e}{p_i}{t_i}{i \in I}
        \hspace{37.4mm} \qedhere
    \end{cases}
    \end{equation*}
\end{definition}


\begin{definition}[Variable Substitution]\ \label{def:sub}~
    %
    \begin{align*}
         e \substitutionvx &\defsymbol  
        \begin{cases}
            v & e = x \\
            y & e = y, \; y \neq x \\
            b & e = b \\
            \stexpbinaryop{e_1 \substitutionvx}{\diamond}{e_2 \substitutionvx} & e = \stexpbinaryop{e_1}{\diamond}{e_2} \\
            \stexpnot{(e'\substitutionvx)} & e = \stexpnot{e'} \\
            \listtypee{e_1\substitutionvx}{e_2\substitutionvx} & e = \listtypee{e_1}{e_2} \\
            \tupletype{e_1\substitutionvx \dotsspace e_n\substitutionvx} \qquad \qquad \qquad \qquad & e = \tupletype{e_1 \dotsspace e_n}
        \end{cases} \\
        \hspace{9mm}
        t \substitutionvx &\defsymbol  
        \begin{cases}
            \stexpsend{\identifier\substitutionvx}{\{:\textlabel{l} , \; e_1 \substitutionvx \dotsspace e_n \substitutionvx \}} & t = \stexpsend{\identifier}{\{:\textlabel{l} , \; e_1 \dotsspace e_n \}} \\
            \stexpreceive{ \left\{ \textlabel{l}_i, \widetilde{p_i} \right\} }{t_i\substitutionvx}{i \in I} & t  = \stexpreceive{ \left\{ \textlabel{l}_i, \widetilde{p_i} \right\} }{t_i}{i \in I} \\
            \stexpfunction{f}{e_1\substitutionvx \dotsspace e_n\substitutionvx} & t = \stexpfunction{f}{e_1 \dotsspace e_n} \\
            \stexpcase{e\substitutionvx}{p_i}{t_i\substitutionvx}{i \in I} & t = \stexpcase{e}{p_i}{t_i}{i \in I} \\ 
            \stexpsequence{y}{t_1\substitutionvx}{t_2\substitutionvx} & t = (\stexpsequence{y}{t_1}{t_2}), \; x \neq y, \; y \neq v
            \hspace{18.93mm}\qedhere
        \end{cases}
    \end{align*}
\end{definition}

}%


\begin{definition}[Type]\ \label{def:typeof}~
    \begin{align*}
        \typeof{\basicvalues{boolean}} &\defsymbol \texttype{\texttype{boolean}} &
        \typeof{\basicvalues{number}} &\defsymbol \texttype{\texttype{number}} \\
        \typeof{\basicvalues{atom}} &\defsymbol \texttype{\texttype{atom}} &
        \typeof{\pidvalue} &\defsymbol \texttype{pid}, \text{ where $\pidvalue$ is a $\basicvalues{pid}$ instance}
        \qedhere
    \end{align*}
\end{definition}

\iftoggle{shortversion}{%
}{%
\begin{definition}[Variable Patterns]
    \label{def:pattern-variables}
    Computes an ordered set of variables from a given pattern $p$. 
    %
    \begin{align*}
        \hspace{33.783mm}\vars{\widetilde{p}} &\defsymbol \vars{p_1 \dotsspace p_n} \defsymbol \vars{p_1} \cup \dots \cup \vars{p_n} \\
        \vars{p} &\defsymbol         
    \begin{cases}
        \emptyset & p = b \\
        \{x\} & p = x \\
        \vars{w_1} \cup \vars{w_2} & p = \listtypee{w_1}{w_2} \\
        \cup_{i \in 1..n} \vars{w_i} & p = \tupletype{w_1 \dotsspace w_n} \hspace{33.783mm} \qedhere
    \end{cases}
    \end{align*}
\end{definition}
}%





\begin{definition}[Function Details]
	\label{def:details}
	We can extract function details (\ie \texttt{params}, \texttt{body}, \texttt{param\_types}, \texttt{return\_type}, \texttt{dual}) from a list of functions ($\widetilde{Q}$) and build a mapping, using set-comprehension, as follows. 
	The list of functions ($\widetilde{Q}$) may consist of public ($D$) and private ($P$) functions.
\begin{equation*}
	\details{\widetilde{Q}} \defsymbol 
	\left\{ 
	\;\;
	\fun{f}{n} :  
	\left[ 
	\begin{aligned}
		&\texttt{dual} = \dualpid, \; \texttt{params} = \widetilde{x}, \\
		&\texttt{param\_types} = \widetilde{T}, \\
		&\texttt{return\_type} = T, \; \texttt{body} = t
	\end{aligned}
	\right]
	\quad \middle | \quad
	\left[
	\begin{aligned}
		& [\stsessionannotation{S}] \\
		& \stspec{f}{\texttype{pid},\widetilde{T}}{T} \\
		& \texttt{def}[\texttt{p}] \; f \! \left( \dualpid,\widetilde{x} \right) \texttt{do} \; t \; \texttt{end}
	\end{aligned}
	\right]
	\in \widetilde{Q}
	\;\; \right\}
\end{equation*}

\end{definition}


\begin{definition}[Functions Names and Arity]
	\label{def:functions}
    This definition takes the set of all public function ($\widetilde{D}$) as input, and returns a set of all public function names and their arity.
%
	\begin{equation*}
		\functions{\widetilde{D}} \defsymbol 
		\left\{ 
		\;\;
		\fun{f}{n}
		\quad \middle | \quad
		\left[
		\begin{aligned}
			& \texttt{@session} \dots; \; \texttt{@spec} \dots \\
			& \stdef{f}{\dualpid,\, x_2 \dotsspace x_n}{t}
		\end{aligned}
		\right]
		\in \widetilde{D}
		\;\; \right\}
	\end{equation*}
    \qedhere
\end{definition}


\begin{definition}[All Session Types]
	\label{def:session}
	The function $\sessions{\widetilde{D}}$, returns the session type corresponding to each annotated public function.
%
%
	\begin{equation*} 
		\sessions{\widetilde{D}} \defsymbol 
		\left\{ 
		\;\;
		\fun{f}{n} : S
		\quad \middle | \quad
		\left[
		\begin{aligned}
			& \texttt{@session } ``S"; \; \texttt{@spec} \dots \\
			& \stdef{f}{\dualpid, \, x_2 \dotsspace x_{n}}{t}
		\end{aligned}
		\right]
		\in \widetilde{D}
		\;\; \right\}
	\end{equation*}
	In case the \annotation{dual} annotation is used instead of \annotation{session}, the dual session type is computed automatically.
	\qedhere
\end{definition}

\newpage

\section{Type System Rules}
\label{sec:appendix-type-system-rules}
In this appendix, we present the full typing rules of the type system, adapted from~\cite{DBLP:conf/agere/TaboneF21}, which were omitted from the \fullref{sec:preliminaries}.

\subsection{Term Typing}
\label{sec:appendix-term-typing-rules}

In \Cref{sec:behavioural-typing}, we explained a few term typing rules, including \prooflabel{tBranch} and \prooflabel{tChoice}.
In \Cref{fig:behavioural-typing}, we present the full list of term typing rules.

\begin{figure}[H] 
	\begin{flushleft}
		\framebox[10px+\width]{$ \typingsigma{\Delta}{\Gamma}{\Sigma}{\identifier}{S}{t}{T}{S'} $}
	\end{flushleft}

	\vspace{-2\baselineskip}

	\begin{prooftree}
		\AxiomC{$ \vphantom{\typing{\Delta}{\left( \Gamma, \Gamma_i' \right)}{\identifier}{S}{t_i}{T}{S'}} \typingexpression{\Gamma}{e}{T} $}
		\LeftLabel{\prooflabel{tExpression}}
		\UnaryInfC{$ \typing{\Delta}{\Gamma}{\identifier}{S}{e}{T}{S} $}
	\end{prooftree}

	\vspace{3px}

	\begin{prooftree}
		\hspace*{-1cm}
		\AxiomC{$ \typing{\Delta}{\Gamma}{\identifier}{S}{t_1}{T'}{S'' } $}	
		\AxiomC{$ \typing{\Delta}{\left(\Gamma, x:T' \right)}{\identifier}{S''}{t_2}{T}{S' } $}
		\AxiomC{$ x \neq \identifier $}
		\LeftLabel{\prooflabel{tLet}}
		\TrinaryInfC{$ \typing{\Delta}{\Gamma}{\identifier}{S}{\stexpsequence{x}{t_1}{t_2}}{T}{S' } $}
	\end{prooftree}

	\vspace{3px}

	\begin{prooftree}
		\hspace*{-2cm}
		\AxiomC{$ \forall i \in I \cdot 
		\begin{cases}
			\forall j \in 1..n \cdot \left\{\typingpattern{\identifier}{p_i^j}{T_i^j}{\Gamma_i^j} \right\}  \\
			\typing{\Delta}{\big( \Gamma, \Gamma_i^1 \dotsspace \Gamma_i^n \big)}{\identifier}{S_i}{t_i}{T}{S'}
		\end{cases}
		$}
		\LeftLabel{\prooflabel{tBranch}}
		\UnaryInfC{$ \typing{\Delta}{\Gamma}{\identifier}{\stsessionbranch{? \textlabel{l}_i \big(\widetilde{T_i}\big) . S_i}{ i \in I}}{\stexpreceive{\left\{ \atom{l}_i, \widetilde{p_i} \right\}}{t_i}{i \in I }}{T}{S'} $}
	\end{prooftree}

	\vspace{3px}

	\mycomment{
		But $ w_j $ cannot be of type tuple
	}
	\begin{prooftree}
		\hspace*{-2cm}
		\AxiomC{$ \exists i \in I $}	
		\AxiomC{$ \texttt{l} = \texttt{l}_i $}
		\AxiomC{$ \forall j \in 1..n \cdot \left\{ \typingexpression{\Gamma}{e_j}{T_i^j} \right\} $}
		\LeftLabel{\prooflabel{tChoice}}
		\TrinaryInfC{$ \typing{\Delta}{\Gamma}{\identifier}{\stsessionchoice{! \textlabel{l}_i \big(\widetilde{T_i}\big) . S_i}{i \in I}}{\stexpsend{{\identifier}}{\left\{\atom{l}, e_1 \dotsspace e_n\right\}}}{ \left\{ \texttype{atom}, T_i^1 \dotsspace T_i^n\right\}}{S_i} $}
	\end{prooftree}

%
	\vspace{3px}

	\mycomment{
		No need to check the type of $\identifier$ ($ \typingexpression{\Gamma}{\identifier}{\texttype{pid}}$) because we compare the whole value of $\identifier$ to the typing judgement
	}
	\begin{prooftree}
		\AxiomC{$ \Delta \left(\fun{f}{n}\right) = S \qquad \forall i \in 2..n \cdot \left\{ \typingexpression{\Gamma}{e_i}{T_i} \right\}$}
		\noLine
		\UnaryInfC{$ \Sigma \left(\fun{f}{n}\right) = \Omega \qquad \Omega.\texttt{return\_type} = T \qquad \Omega.\texttt{param\_types} = \widetilde{T} $}
		\LeftLabel{\prooflabel{tRecKnownCall}}
		\UnaryInfC{$ \typing{\Delta}{\Gamma}{\identifier}{S}{\stexpfunction{f}{\identifier, \; e_2 \dotsspace e_{n}}}{T}{\stend} $}
	\end{prooftree}

	\vspace{3px}

	\begin{prooftree}
		\hspace*{-8mm}
		\AxiomC{$ \Sigma\left(\fun{f}{n}\right) = \Omega \qquad \qquad \fun{f}{n} \notin \dom{\Delta} \qquad \qquad \Omega.\texttt{dual} = \dualpid $}
		\noLine
		\UnaryInfC{$ \Omega.\texttt{params} = \widetilde{x} \quad\; \Omega.\texttt{param\_type} =  \widetilde{T} \quad\; \Omega.\texttt{body} = t \quad\; \Omega.\texttt{return\_type} = T  $}
		\noLine
		\UnaryInfC{$  $}	
		\noLine
		\UnaryInfC{$ \typing{\left(\Delta, \fun{f}{n} : S \right)}{\big(\Gamma, \dualpid : \texttype{pid}, \widetilde{x} : \widetilde{T}\big)}{\dualpid}{S}{t}{T}{S'} \qquad \;\; \forall  i \in 2..n \cdot \left\{ \typingexpression{\Gamma}{e_i}{T_i} \right\} $}
		\LeftLabel{\prooflabel{tRecUnknownCall}}
		\UnaryInfC{$ \typing{\Delta}{\Gamma}{\identifier}{S}{\stexpfunction{f}{\identifier, \; e_2 \dotsspace e_{n}}}{T}{S'} $}	
	\end{prooftree}

	\vspace{3px}

	\begin{prooftree}
		\AxiomC{$ \typingexpression{\Gamma}{e}{U} $}
		\noLine
		\UnaryInfC{$ \forall i \in I \qquad \typingpattern{\identifier}{p_i}{U}{\Gamma_i'} \qquad \typing{\Delta}{\left( \Gamma, \Gamma_i' \right)}{\identifier}{S}{t_i}{T}{S'} $}
		\LeftLabel{\prooflabel{tCase}}
		\UnaryInfC{$ \typing{\Delta}{\Gamma}{\identifier}{S}{\stexpcase{e}{p_i}{t_i}{i \in I}}{T}{S'} $}
	\end{prooftree}

	\caption{Term typing rules}
	\label{fig:behavioural-typing}
\end{figure}

\subsection{Expression Typing}
\label{sec:appendix-expression-typing-rules}

Expression are typechecked using the $\typingexpression{\Gamma}{e}{T}$ judgement, which states that ``an expression $e$ has type $T$, subject to the \emph{variable binding} environment $\Gamma$.''
The expression typing rules are listed in \Cref{fig:expression-typing}.

\begin{figure}[H]
		
	\begin{flushleft}
		\framebox[10px+\width]{$ \typingexpression{\Gamma}{e}{T} $}
	\end{flushleft}

	\vspace{-2.8\baselineskip}
	
	\begin{prooftree}
		\AxiomC{$ \forall i \in 1..n \qquad \typingexpression{\Gamma}{e_i}{T_i} $}
		\LeftLabel{\prooflabel{tTuple}}
		\UnaryInfC{$ \typingexpression{\Gamma}{\left\{e_1 \dotsspace e_n\right\}}{\left\{ T_1, \dotsspace T_n \right\}} $}
	\end{prooftree}
	
	\vspace{1px}
	
	\begin{prooftree}
		\AxiomC{$ \typeof{b} \; = \; T $}
		\AxiomC{$ b \neq \listtype{} $}
		\LeftLabel{\prooflabel{tLiteral}}
		\BinaryInfC{$ \typingexpression{\Gamma}{b}{T} $}
		\DisplayProof
		\hspace*{9mm}
		\AxiomC{$ \Gamma \left(x\right) = T $}
		\LeftLabel{\prooflabel{tVariable}}
		\UnaryInfC{$ \typingexpression{\Gamma}{x}{T} $}
	\end{prooftree}
	
	\vspace{1px}

	\begin{prooftree}
		\AxiomC{$ \typingexpression{\Gamma}{e_1}{T} $}
		\AxiomC{$ \typingexpression{\Gamma}{e_2}{\listtype{T}} $}
		\LeftLabel{\prooflabel{tList}}
		\BinaryInfC{$ \typingexpression{\Gamma}{\listtypee{e_1}{e_2}}{\listtype{T}} $}
		\DisplayProof
		\hspace*{9mm}
		\AxiomC{$ \vphantom{\typingexpression{\Gamma}{e_2}{\listtype{T}}} $}
		\LeftLabel{\prooflabel{tEList}}
		\UnaryInfC{$ \typingexpression{\Gamma}{\listtype{}}{\listtype{T}} $}
	\end{prooftree}

	\vspace{1px}

	\begin{prooftree}
		\AxiomC{$ \typingexpression{\Gamma}{e_1}{\texttype{number}} \qquad \typingexpression{\Gamma}{e_2}{\texttype{number}} \qquad \diamond \in \left\{ + , \; - , \; * , \; / \right\}$}
		\LeftLabel{\prooflabel{tArithmetic}}
		\UnaryInfC{$ \typingexpression{\Gamma}{\stexpbinaryop{e_1}{\diamond}{e_2}}{\texttype{number}} $}
	\end{prooftree}
	
	\vspace{1px}

	\begin{prooftree}
		\AxiomC{$ \typingexpression{\Gamma}{e_1}{\texttype{boolean}} \qquad \typingexpression{\Gamma}{e_2}{\texttype{boolean}} \qquad \diamond \in \left\{ \stexpandonly ,\; \stexporonly \right\} $}
		\LeftLabel{\prooflabel{tBoolean}}
		\UnaryInfC{$ \typingexpression{\Gamma}{\stexpbinaryop{e_1}{\diamond}{e_2}}{\texttype{boolean}} $}
	\end{prooftree}

	\vspace{3px}


	\begin{minipage}{1.2\textwidth}	
		\hspace*{6mm}
		\AxiomC{$ \diamond \in \left\{ < , \; > , \; <= , \; >= , \; == , \; ! \! = \right\} $}
		\noLine
		\UnaryInfC{$ \typingexpression{\Gamma}{e_1}{T}  \qquad \typingexpression{\Gamma}{e_2}{T} $}
		\LeftLabel{\prooflabel{tComparisons}}
		\UnaryInfC{$ \typingexpression{\Gamma}{\stexpbinaryop{e_1}{\diamond}{e_2}}{\texttype{boolean}}  $}
		\DisplayProof
		\hspace*{8mm}
		\AxiomC{$ \vphantom{\diamond \in \left\{ < , \; > , \; <= , \; >= , \; == , \; ! \! = \right\}} $}
		\noLine
		\UnaryInfC{$ \typingexpression{\Gamma}{e}{\texttype{boolean}} $}
		\LeftLabel{\prooflabel{tNot}}
		\UnaryInfC{$ \typingexpression{\Gamma}{\stexpnot{e}}{\texttype{boolean}} $}
		\DisplayProof
	\end{minipage}

	\caption{Expression typing rules}
	\label{fig:expression-typing}
\end{figure}

\subsection{Pattern Typing}
\label{sec:appendix-pattern-typing-rules}

New variables may be created using patterns in the \prooflabel{tBranch} and \prooflabel{tCase} rules.
These variables are matched to a type using the judgement, $ \typingpattern{\identifier}{p}{T}{\Gamma}$.
This judgement states that ``a pattern $p$ is matched to type $T$, where it produces new variables and their types are collected $\Gamma$; under the assumption that the variable containing the dual \pid, $\identifier$, remains unchanged.''
The pattern typing rules are found in \Cref{fig:pattern-typing}.

\begin{figure}[H] 
	\begin{flushleft}
		\framebox[10px+\width]{$ \typingpattern{\identifier}{p}{T}{\Gamma}$}
	\end{flushleft}

	\vspace{-0.7\baselineskip}
		
	\begin{minipage}{1.2\textwidth}
		
	\begin{center}
		\hspace*{-25mm}		
		\AxiomC{$ \typingexpression{\emptyset}{b}{T} \qquad b \neq \listtype{}$}
		\LeftLabel{\prooflabel{tpLiteral}}
		\UnaryInfC{$ \typingpattern{\identifier}{b}{T}{\emptyset} $}
		\DisplayProof
		\hspace*{8mm}
		\AxiomC{$ x \neq \identifier$}
		\LeftLabel{\prooflabel{tpVariable}}
		\UnaryInfC{$ \typingpattern{\identifier}{x}{T}{x : T} $}
		\DisplayProof
	\end{center}

	\end{minipage}		

	\vspace{3px}

	\begin{prooftree}
		\AxiomC{$ \forall i \in 1..n $}
		\AxiomC{$ \typingpattern{\identifier}{w_i}{T_i}{\Gamma_i} $}
		\LeftLabel{\prooflabel{tpTuple}}
		\BinaryInfC{$ \typingpattern{\identifier}{\tupletype{w_1 \dotsspace w_n}}{\tupletype{T_1 \dotsspace T_n }}{\Gamma_1 \dotsspace \Gamma_n} $}
	\end{prooftree}
		
	\vspace{3px}

	\begin{minipage}{1.2\textwidth}
		
	\begin{center}
		\hspace*{-30mm}
		\AxiomC{$ \typingpattern{\identifier}{w_1}{T}{\Gamma_1} $}
		\AxiomC{$ \typingpattern{\identifier}{w_2}{\listtype{T}}{\Gamma_2} $}
		\LeftLabel{\prooflabel{tpList}}
		\BinaryInfC{$ \typingpattern{\identifier}{\listtypee{w_1}{w_2}}{\listtype{T}}{\Gamma_1, \Gamma_2} $}
		\DisplayProof
		\hspace*{4mm}
		\AxiomC{$ \vphantom{\typingpattern{\identifier}{w_1}{T}{\Gamma_1}}$}
		\LeftLabel{\prooflabel{tpEList}}
		\UnaryInfC{$ \typingpattern{\identifier}{\listtype{}}{\listtype{T}}{\emptyset} $}
		\DisplayProof
	\end{center}

	\end{minipage}

	\caption{Pattern typing rules}
	\label{fig:pattern-typing}
\end{figure}

\iftoggle{shortversion}{%
}{%
  \section{Proofs}
  \label{sec:appendix-proofs}
  \input{sections/proofs/appendix.tex}
}%

\end{document}